\newcommand{\ineg}{{\smallsmile}}
\newcommand{\uneg}{{\smallfrown}}
\newcommand{\wsmile}{\scalebox{1}{$\mathrlap{\Circle}\textcolor{black}{\smallsmile}$}}
\newcommand{\wfrown}{\scalebox{1}{$\mathrlap{\Circle}\textcolor{black}{\smallfrown}$}}
\newcommand{\connec}{\pentagon}
\newcommand{\Circled}[1]{\textcircled{\scriptsize$#1$}}
\newcommand{\ttt}{t}
\newcommand{\fff}{f}
\newcommand{\TT}{\mathbf{T}}
\newcommand{\FF}{\mathbf{F}}
\newcommand{\XX}{\mathbf{X}}
\newcommand{\val}[4]{\ensuremath{{#1}^{#2}_{#3}(#4)}}
\newcommand{\tru}[2]{\val{\TT}{}{{\!#1}}{#2}}
\newcommand{\fal}[2]{\val{\FF}{}{{\!#1}}{#2}}
\newcommand{\truQ}[2]{\val{\TT}{Q}{#1}{#2}}
\newcommand{\falQ}[2]{\val{\FF}{Q}{#1}{#2}}
\newcommand{\xval}[2]{\val{\XX}{}{#1}{#2}}
\newcommand{\xvalQ}[2]{\val{\XX}{Q}{#1}{#2}}
\newcommand{\setP}{{\cal P}}
\newcommand{\setS}{{\cal L}}
\newcommand{\M}{{\cal M}}
\newcommand{\D}{{\bf D}}
\newcommand{\B}{{\bf B}}
\newcommand{\K}{{\bf K}}
\newcommand{\T}{{\bf T}}
\newcommand{\FUNC}{{\bf Fun}}
\newcommand{\Q}{{\cal Q}}
\newcommand{\F}{{\cal F}}
\newcommand{\E}{{\cal E}}
\newcommand{\wfr}{\prec}
\DeclareSymbolFont{symbolsC}{U}{txsyc}{m}{n}
\DeclareMathSymbol{\strictif}{\mathrel}{symbolsC}{74}
\newcommand{\suq}{\subseteq}
\newcommand{\set}[1]{\left\{#1\right\}}
\newcommand{\tup}[1]{\left<#1\right>}
\newcommand{\st}{{\ |\ }}   
\newcommand{\Ra}{\Rightarrow}
\newcommand{\rs}{{\,/\,}} 
\newcommand{\GKF}{{\rm {KF}}}
\newcommand{\GPK}{{\rm {PK}}}
\newcommand{\GPKB}{{\rm {PKB}}}
\newcommand{\GPKD}{{\rm {PKD}}}
\newcommand{\GPKT}{{\rm {PKT}}}
\newcommand{\GPKF}{{\rm {PKF}}}
\newcommand{\g}{\Gamma}
\renewcommand{\d}{\Delta}
\newcommand{\w}{\wedge}
\newcommand{\ssrul}[2]{\begin{array}{c}#1\\ \hline \ST #2\end{array}}
\newcommand\ST{\rule[-0.5em]{0pt}{1.5em}}
\newcommand{\Lano}[2]{{#1}{{}\text{-}{#2}}}
\newcommand*{\ddfrac}[2]{\genfrac{}{}{0pt}{0}{#1}{#2}} 
\begin{document}

\begin{frontmatter}
  \title{It ain't necessarily so: \\
  \ Basic sequent systems for negative modalities}
  \author{Ori Lahav}
\address{Max Planck Institute for Software Systems (MPI-SWS), Germany} 
  \author{Jo\~ao Marcos}
  \address{Federal University of Rio Grande do Norte, Brazil}
  \author{Yoni Zohar}
  \address{Tel Aviv University, Israel}
  
  \begin{abstract}
We look at non-classical negations and their corresponding adjustment connectives from a modal viewpoint, over complete distributive lattices, and apply a very general mechanism in order to offer adequate analytic proof systems to logics that are based on them. 
Defining non-classical negations within usual modal semantics automatically allows one to treat equivalent formulas as synonymous, and to have a natural justification for a global version of the contraposition rule. From that perspective, our study offers a particularly useful environment in which negative modalities and their companions may be used for dealing with inconsistency and indeterminacy. 
After investigating modal logics based on arbitrary frames, we extend the results to serial frames, reflexive frames, functional frames, and symmetric frames. In each case we also investigate when and how classical negation may thereby be defined.%
  \end{abstract}

  \begin{keyword}
negative modalities, sequent systems, cut-admissibility, analyticity.
  \end{keyword}
 \end{frontmatter}


\section{Capturing the impossible, and its dual}
\label{sec:goals}
 Many well-known subclassical logics ---including intuitionistic logic and several many-valued logics--- share the conjunction-disjunction fragment of classical logic, but disagree about the exact notion of opposition and the specific logical features to be embodied in \textit{negation}.  
In contrast, modal logics are often thought of as superclassical, and are obtained by the addition of \textit{identity}-like `positive modalities'~$\Box$ and~$\lozenge$. 
For various well-known cases, such modalities fail to have a finite-valued characterization.
Notwithstanding, each $m$-ary connective~$\connec$ of a modal logic is typically \textit{congruential} (with respect to the underlying consequence relation~$\vdash$), in treating equivalent formulas as synonymous: 
if $\alpha_i\vdash\beta_i$ and $\beta_i\vdash\alpha_i$, for every $1\leq i\leq m$, then $\connec(\alpha_1,\ldots,\alpha_m)\vdash\connec(\beta_1,\ldots,\beta_m)$.
  To logical systems containing only such sort of connectives one might associate semantics in terms of neighborhood frames (see ch.5 of~\cite{Wojcicki88}), and the same applies if one uses 1-ary `negative modalities' instead, as in~\cite{ripley:PhD}. 
But normal modal logics make their 1-ary positive modalities respect indeed a stronger property: if $\alpha\vdash\beta$ then $\connec(\alpha)\vdash\connec(\beta)$. 
Such monotone behavior may be captured by semantics based on Kripke frames, and the same applies to the antitone behavior that characterize negative modalities, namely: if $\alpha\vdash\beta$ then $\connec(\beta)\vdash\connec(\alpha)$.

In~\cite{dun:zho:neggag:05} an investigation of negative modalities is accomplished on top of the ${\land}{\lor}{\top}{\bot}$-
fragment of classical logic, and the same base language had already been considered in~\cite{res:combpossneg:SL97} for the combination of positive and negative modalities.  Typically, in studies of positive and negative modalities the so-called compatibility (bi-relational) frames are used, and certain appropriate conditions upon the commutativity of diagrams involving their two relations are imposed, having as effect the heredity of truth (i.e., its persistence towards the future) with respect to one of the mentioned relations (assumed to be a partial order).

There are a number of studies (e.g.~\cite{vaka:cons89:full,Dos:NMOiIL:1984}) in which the above mentioned languages for dealing with negative modalities are upgraded in order to count on an (intuitionistic or classical) implication, and sometimes also its dual, co-implication (cf.~\cite{rau:BH:DM1980}).  If one may count on classical implication, however, it suffices to add to it the modal paraconsistent negation given by `unnecessity' (cf.~\cite{jmarcos:neNMLiP}), and all other connectives of normal modal logics turn out to be definable from such impoverished basis (indeed, where~$\ineg$ is a primitive symbol for unnecessity and $\to$ represents classical implication, we have that ${\sim}\alpha:=\alpha\to\ineg(\alpha\to\alpha)$ behaves as the classical negation of~$\alpha$, and $\Box\alpha:={\sim}\ineg\alpha$ behaves as the usual positive modality box).

Our intuition about the relation between a paracomplete (a.k.a.~`intui\-tion\-istic-like') negation and a paraconsistent negation is that the former would be expected to be more demanding than the latter, while classical negation should sit between the two (whenever it also turns out to be expressible). 
It takes indeed more effort to assert a negated statement constructively, while such statements are more readily asserted should some contradictions be allowed to subsist; in other words, negations in a paracomplete logic come at a greater cost than classical negations, while paraconsistent logics indulge on negations in which classical logic would show greater restraint.
The presence of a classical negation, however, often makes it too easy to forget that there are two distinct kinds of deviations equally worth studying, concerning non-classical negation, as one of these deviations may then be recovered in the standard way as the dual of the other.
In order to get a better grasp of the duality between paraconsistent and paracomplete modal negations (namely, unnecessity vs.\ impossibility), we purposefully make an effort to prevent the underlying language from being sufficiently expressive so as to allow for the definition of  a classical negation (or a classical implication) --- whenever that goal lies within reach.  Here we do however in all cases enrich our object language with certain `adjustment connectives' expressing negation-consistency and negation-determinacy, allowing for the simulation of usual features of classical negation and for the (partial) recovery of classical reasoning.  
It should be noted, however, that as a byproduct of the presence of such adjustment connectives truth will no longer be hereditary in our Kripke models, that is, it will not in general be preserved for all compound formulas towards the future, in contrast with what happens with models of compatibility frames.

In what follows, first and foremost we will concentrate on the logic $PK$, determined by the class of all Kripke frames, which has been introduced and received a presentation as a sequent system in~\cite{dod:mar:ENTCS2013}.  We show here that it can be reintroduced in terms of a so-called `basic sequent system', which allows one to take advantage of general techniques developed in~\cite{lah:avr:Unified2013}, including a method for obtaining sound and complete Kripke semantics and a uniform recipe for semantic proofs of cut-admissibility or analyticity.
The next section adopts a semantical perspective to explain why and how our study is done.%

\section{On negative modalities}
\label{sec:intro}
 We briefly recall the now familiar elements of a Kripke semantics.  
A~\textsl{frame} is a structure consisting of a nonempty set~$W$ (of `worlds') and a binary (`accessibility') relation~$R$ on~$W$. 
A \textsl{model} ${\cal M}=\langle {\cal F},V\rangle$ is based on a frame ${\cal F}=\langle W, R\rangle$ and on a \textsl{valuation} $V:W\times \setS \to \{\fff,\ttt\}$ that assigns truth-values 
to worlds $w\in W$ and sentences~$\varphi$ of a propositional language~$\setS$ generated over a denumerable set of propositional variables~$\setP$.
The valuations must satisfy certain conditions that are induced by the fixed interpretation of the connectives of the language.
When $V(w,\varphi)=\ttt$ we say that~$V$ \textsl{satisfies~$\varphi$ at~$w$}, and denote this by ${\cal M}, w\Vdash \varphi$; otherwise 
we write ${\cal M}, w\not\Vdash \varphi$ and say that~$V$ \textsl{leaves~$\varphi$ unsatisfied at~$w$}.
The connectives from the positive fragment of classical logic receive their standard boolean interpretations locally, world-wise, by recursively setting:

\smallskip

\noindent
\begin{tabular}{l l c l}
{[S$\top$]} & ${\cal M}, w\Vdash \top$\\
{[S$\land$]} & ${\cal M}, w\Vdash \varphi\land\psi$ & iff &  ${\cal M}, w\Vdash \varphi$ and ${\cal M}, w\Vdash\psi$\\
{[S$\lor$]} & ${\cal M}, w\not\Vdash \varphi\lor\psi$ & iff &  ${\cal M}, w\not\Vdash \varphi$ and ${\cal M}, w\not\Vdash\psi$\\
\end{tabular} 
\smallskip

\noindent
Given formulas $\Gamma\cup\Delta$ of~$\setS$, and given a class of frames~${\cal E}$, we say that \textsl{$\Gamma$ entails $\Delta$ in~${\cal E}$}, and denote this by $\Gamma\models_{\cal E}\Delta$, if for each model~${\cal M}$ based on a frame ${\cal F}\in{\cal E}$ and each world~$w$ of ${\cal M}$ we have either ${\cal M},w\not\Vdash\gamma$ for some $\gamma\in\Gamma$ or ${\cal M},w\Vdash\delta$ for some $\delta\in\Delta$.
The assertion $\Gamma\models_{\cal E}\Delta$ will be called a \textsl{consecution}.
As usual, in what follows we will focus most of the time on consecutions $\Gamma\models_{\cal E}\Delta$ involving a singleton~$\Delta$, and in the next section we will extend the notion of entailment so as to cover sequents instead of formulas. 
The subscript~${\cal E}$ shall be omitted in what follows whenever there is no risk of ambiguity.

In the following subsections we extend the above language with connectives whose modal interpretations will be useful for the investigation of non-classical negations.

\subsection{Adding negations}
\label{addingNEG}

Our first extension of the above language proceeds by the addition of a 1-ary connective~$\ineg$, to be interpreted  non-locally as follows:
\smallskip

\noindent
\begin{tabular}{l l c l}
{[S$\ineg$]} & ${\cal M}, w\Vdash \ineg\varphi$ & iff & ${\cal M}, v\not\Vdash \varphi$ for \underline{some} $v\in W$ such that $wRv$\\
\end{tabular} 
\smallskip

\noindent
Accordingly, a formula $\ineg\varphi$ is said to be satisfied at a given world of a model precisely when the formula $\varphi$ fails to be satisfied at some world accessible from this given world.  
In the following paragraph we will show that~$\ineg$ respects some minimal conditions to deserve being called a `negation', namely, we will demonstrate its ability to invert truth-values assigned to certain formulas (at certain worlds).

Let~$\#$ represent an arbitrary 1-ary connective, and let $\#^j$ abbreviate a $j$-long sequence of~$\#$'s.  The least we will demand from~$\#$ to call it a \textsl{negation} is that, for every $p\in\setP$ and every $k\in\mathbb{N}$:
\smallskip

\noindent
\begin{tabular}{l l l l}
{$\llbracket$\textit{falsificatio}$\rrbracket$} &  $\#^k p\not\models\#^{k+1} p$ \hspace{10mm} & 
{$\llbracket$\textit{verificatio}$\rrbracket$} &  $\#^{k+1} p\not\models\#^k p$\\
\end{tabular} 
\smallskip

\noindent 
To witness {$\llbracket$\textit{falsificatio}$\rrbracket$}, some sentence~$\varphi$ is to be satisfied while the sentence $\#\varphi$ is not simultaneously satisfied; for {$\llbracket$\textit{verificatio}$\rrbracket$} some sentence~$\varphi$ is left unsatisfied while at the same time $\#\varphi$ is satisfied. 
To check that the connective~$\ineg$ fulfills such requisites, it suffices for instance to build a frame in which $W=\{w_n:n\in\mathbb{N}\}$ and $wRv$ iff $v=w^{+\!+}$ (namely, $v$ is the successor of~$w$), and consider a valuation~$V$ such that $V(w_n,p)=\ttt$ iff $n$ is odd.

It is very easy to see that our connective~$\ineg$ 
satisfies \textsl{global contraposition}
in the sense that 
$\alpha\models\beta\mbox{ implies }\ineg\beta\models\ineg\alpha$.
Indeed, assume $\alpha\models\beta$ and suppose that  ${\cal M}, w\Vdash\ineg\beta$ for some world~$w$ of an arbitrary model~${\cal M}$.  Then, [S$\ineg$] informs us that there must be some world~$v$ in~${\cal M}$ such that $wRv$ and ${\cal M}, v\not\Vdash \beta$.  By the definition of entailment, the initial assumption gives us ${\cal M}, v\not\Vdash \alpha$.  Using again [S$\ineg$] we conclude that ${\cal M}, w\Vdash\ineg\alpha$.  As a byproduct of this, if one defines an equivalence relation~$\equiv$ on~$\setS$ by setting $\alpha\equiv\beta$ whenever both $\alpha\models\beta$ and $\beta\models\alpha$, then an easy structural induction on~$\setS$ establishes that~$\equiv$ is not only compatible with~$\ineg$ but also with the other connectives that are used in constructing the algebra of formulas; in other words, $\equiv$ constitutes a congruence relation on~$\setS$.  

It is straightforward to see that any 
1-ary connective~$\#$ satisfying global contraposition is such that, given $p,q\in\setP$:
\smallskip

\noindent
\begin{tabular}{l l l l}
(DM1.1\#) &  $\#(p\lor q)\models\# p\land \# q$ \hspace{5mm} & 
(DM2.1\#) &  $\# p\lor \# q\models\# (p\land q)$\\
\end{tabular} 
\smallskip

\noindent 
If~$\#$ also respects the following consecutions, then it is said to be a \textsl{full type 
diamond-minus connective}:
\smallskip

\noindent
\begin{tabular}{l l l l}
(DM2.2\#) &  $\#(p\land q)\models\# p\lor \# q$ \hspace{5mm} & %
(DT\#) &  $\# \top\models p$\\
\end{tabular} 
\smallskip

\noindent 
Note that~$\ineg$ is a full type 
diamond-minus connective.
To check that $\ineg$ satisfies (DM2.2\#), indeed, suppose that ${\cal M}, w\Vdash\ineg(p\land q)$ for some arbitrary world~$w$ of an arbitrary model~${\cal M}$.  By [S$\ineg$] we know that there is some world~$v$ such that $wRv$ and ${\cal M}, v\not\Vdash p\land q$.  It follows by [S$\land$] that ${\cal M}, v\not\Vdash p$ or  ${\cal M}, v\not\Vdash q$.  Using [S$\ineg$] again we conclude that ${\cal M}, w\Vdash\ineg p$ or ${\cal M}, w\Vdash\ineg q$ and [S$\lor$] gives us ${\cal M}, w\Vdash\ineg p\lor \ineg q$.  In addition, to check that~$\ineg$ satisfies (DT\#) one may invoke [S$\ineg$] and [S$\top$].  Note that satisfying (DT\#) means that the nullary connective~$\bot$ taken as an abbreviation of $\# \top$ is interpretable by setting, for every world~$w$ of every model~${\cal M}$:
\smallskip

\noindent
\begin{tabular}{l l c l}
{[S$\bot$]} & ${\cal M}, w\not\Vdash \bot$\\
\end{tabular} 
\smallskip

Given a negation~$\#$, we call the logic containing it \textsl{$\#$-paraconsistent} if the following consecution fails, for $p,q\in\setP$: 
\smallskip

\noindent
\begin{tabular}{l l}
{$\llbracket$\#-explosion$\rrbracket$} &  $p,\#p\models q$\\
\end{tabular} 
\smallskip

\noindent 
This means that there must be valuations that satisfy both some sentence~$\varphi$ and the sentence $\#\varphi$ while not satisfying every other sentence.  It is worth noticing that $\llbracket\ineg$-explosion$\rrbracket$ holds good in frames containing exclusively worlds that are accessible to themselves, and themselves only (call such worlds `narcissistic') and worlds that do not access any other world (call them `dead ends'): in the former case, it is impossible to simultaneously satisfy both~$\varphi$ and $\ineg\varphi$; in the latter case, the sentence $\ineg\varphi$ is never satisfied.  Note moreover that in the class of all narcissistic frames the connective~$\ineg$ happens to behave like classical negation, i.e., it behaves like the symbol~$\sim$ in the following semantic clause:
\smallskip

\noindent
\begin{tabular}{l l c l}
{[S$\sim$]} &  ${\cal M},w\Vdash {\sim} \varphi$ & iff & ${\cal M},w\not\Vdash \varphi$ \\
\end{tabular} 
\smallskip

\noindent 
In contrast, in the class of all frames whose worlds are all dead ends the connective~$\ineg$ does not respect [verificatio], and cannot be said thus to be a negation.

We now make a further extension of the above language by adding a 1-ary connective $\uneg$, non-locally interpreted as follows:
\smallskip

\noindent
\begin{tabular}{l l c l}
{[S$\uneg$]} & ${\cal M}, w\Vdash \uneg\varphi$ & iff & ${\cal M}, v\not\Vdash \varphi$ for \underline{\smash{every}} $v\in W$ such that $wRv$\\
\end{tabular} 
\smallskip

\noindent 
It is not difficult to check that again we have a connective that qualifies as a negation, and satisfies global contraposition.
\noindent
To reinforce the meta-theo\-re\-tical duality between the latter negation and the negation introduced above through [S$\ineg$], we will henceforth refer to the previous interpretation clause in the following equivalent form:
\smallskip

\noindent
\begin{tabular}{l l c l}
{[S$\uneg$]} & ${\cal M}, w\not\Vdash \uneg\varphi$ & iff & ${\cal M}, v\Vdash \varphi$ for \underline{some} $v\in W$ such that $wRv$\\
\end{tabular} 
\smallskip

A \textsl{full type 
box-minus connective} is a 1-ary connective~$\#$ that respects:
\smallskip

\noindent
\begin{tabular}{l l l l}
(DM1.2\#) &  $\# p\land \# q\models\#(p\lor q)$ \hspace{5mm} & %
(DF\#) &  $p\models \# \bot$\\
\end{tabular} 
\smallskip

\noindent 
One may easily check that~$\uneg$ is indeed a full type 
box-minus connective.

Given a negation~$\#$, we call the logic contaning it \textsl{$\#$-paracomplete} if it fails the following consecution, for $p,q\in\setP$: 
\smallskip

\noindent
\begin{tabular}{l l}
{$\llbracket$\#-implosion$\rrbracket$} &  $q\models \# p, p$\\
\end{tabular}
\smallskip

\noindent
Such failure will clearly be the case for~$\#=\uneg$ as soon as we entertain frames that contain worlds that are neither dead ends nor narcissistic.  Otherwise, we see that $\uneg$ will behave either like classical negation (if all worlds are narcissistic) or like~$\top$ (if all worlds are dead ends).

In the following sections, unless noted otherwise, we will no longer consider classes of frames containing only frames with worlds that are either dead ends or narcissistic --- so we will only consider entailment relations that are $\ineg$-paracon\-sist\-ent and $\uneg$-paracomplete, for the negative modalities $\ineg$ (assumed to be full-type diamond-minus) and~$\uneg$ (assumed to be full-type box-minus).

\subsection{Recovering negation-consistency and negation-determinacy}
\label{recovering}

In what follows we will call a model \textsl{dadaistic} when it contains some world in which all formulas are satisfied, and call it \textsl{nihilistic} if it leaves all formulas unsatisfied at some world.  It is straightforward to see that the language based on ${\land}{\lor}\top\ineg$, with the above interpretations, admits dadaistic models, while the language based on ${\land}{\lor}\bot\uneg$ admits nihilistic models.

Recall that a $\#$-paraconsistent logic allows for valuations that satisfy certain formulas~$\varphi$ and~$\# \varphi$ while leaving some other formula~$\psi$ unsatisfied (at some fixed world).  There might be reasons for disallowing this phenomenon to occur with an arbitrary~$\varphi$, or for restricting to certain formulas~$\psi$ but not others.  A particularly useful way of keeping a finer control over which `inconsistencies' of the form~$\varphi$ and~$\# \varphi$ are to be acceptable within non-dadaistic models is to mark down the formula thereby involved so as to recover a `gentle' version of $\llbracket$\#-explosion$\rrbracket$.  Concretely, for us here, a 1-ary connective~$\Circled{\#}$ that strongly internalizes the meta-theoretic `consistency assumption' at the object language level will be such that:
\smallskip

\noindent
\begin{tabular}{l l c l}
{[SC$\#$]} & ${\cal M}, w\Vdash \Circled{\#}\varphi$ & iff & ${\cal M}, w\not\Vdash \varphi$ or ${\cal M}, w\not\Vdash \#\varphi$\\
\end{tabular} 
\smallskip

\noindent
It is easy to check that any connective~$\Circled{\#}$ respecting [SC$\#$] is such that:
\smallskip

\noindent
\begin{tabular}{l l l l l l}
(C1\#) &  $\Circled{\#} p, p, \#  p\models$ \hspace{5mm} & 
(C2\#) &  $\models p,\Circled{\#} p$ \hspace{5mm} & 
(C3\#) &  $\models \# p,\Circled{\#} p$ \\
\end{tabular} 
\smallskip

\noindent
Note in particular that (C1\#) guarantees that there are no valuations that satisfy (at a fixed world) both $p$ and $\# p$ if these are put in the presence of $\Circled{\#} p$.  Thus, in case $\#$ fails $\llbracket$\#-explosion$\rrbracket$ we may look at the latter formula involving~$\Circled{\#}$ as guaranteeing that a weaker form of explosion is available.  On these grounds we shall call the connective~$\Circled{\#}$ an \textit{adjustment} companion to~$\#$: it allows one to recover explosion from within a non-\#-explosive (i.e., paraconsistent) logical context, and adjust the consecutions of the underlying logic so as to allow for the simulation of the consecutions that would otherwise be justified by reference to $\llbracket$\#-explosion$\rrbracket$.  Semantically, the presence of such connective also guarantees that dadaistic models are not admissible over the language based on ${\land}{\lor}\top\ineg\wsmile$, with the above interpretations.  This is because a formula of the form $\wsmile\varphi\land(\varphi\land\ineg\varphi)$ is equivalent to a formula~$\bot$ respecting [S$\bot$].

Dually, a \#-paracomplete logic allows for valuations that leave the formulas~$\varphi$ and~$\#\varphi$ both unsatisfied (at some fixed world), while satisfying some other formula~$\psi$. A particular way of keeping a finer control over which `indeterminacies' of the form $\varphi$ and~$\#\varphi$ are to be acceptable within non-nihilistic models is to allow for a `gentle' version of $\llbracket${\#-implosion}$\rrbracket$, where a 1-ary connective~$\Circled{\#}$ internalizes the meta-theoretic `determinacy assumption' at the object language level, in such a way that:
\smallskip

\noindent
\begin{tabular}{l l c l}
{[SD$\#$]} & ${\cal M}, w\not\Vdash \Circled{\#}\varphi$ & iff & ${\cal M}, w\Vdash \varphi$ or ${\cal M}, w\Vdash \#\varphi$\\ 
\end{tabular} 
\smallskip

\noindent
Clearly, any connective~$\Circled{\#}$ respecting [SD$\#$] is such that:
\smallskip

\noindent
\begin{tabular}{l l l l l l}
(D1\#) &  $\models\#  p, p, \Circled{\#} p$ \hspace{5mm} & 
(D2\#) &  $\Circled{\#} p,p\models $ \hspace{5mm} & 
(D3\#) &  $\Circled{\#} p,\# p\models$ \\
\end{tabular} 
\smallskip

Note that a formula of the form $(\varphi\lor\uneg\varphi)\lor\wsmile\varphi$ is equivalent to a formula~$\top$ respecting [S$\top$].
Note, moreover, that whenever it turns out that a connective~$\#$ respects $\llbracket${\#-explosion}$\rrbracket$ and at the same time its adjustment companion~$\Circled{\#}$ re\-spects [SC$\#$], then the formula $\Circled{\#}\varphi$ is equivalent to~$\top$.  In an analogous way, whenever a connective~$\#$ respects $\llbracket$\#-implosion$\rrbracket$ and at the same time its adjustment companion~$\Circled{\#}$ respects [SD$\#$], the formula $\Circled{\#}\varphi$ is equivalent to~$\bot$.  This stresses the fact that the adjustment connectives with which we deal in this subsection are more interesting when they accompany the respective non-classical negations to whose meaning they contribute.

At this point we have finally finished constructing the richest language that will be used throughout the rest of the paper: It will contain the connectives ${\land}{\lor}\top\bot\ineg\wsmile\uneg\wfrown$, disciplined by the [S\#] conditions above.  In the following subsection we will explain precisely when a classical negation, that is a 1-ary connective~$\sim$ subject to condition [S${\sim}$], is definable with the use of our language.  
Fixed such language, the logic characterized over it by the class~$\E$ of all frames will be called $PK$; the logic characterized by the class~$\E_{\D}$ of all frames with serial accessibility relations will be called $PKD$; the logic characterized by the class~$\E_{\T}$ of all frames with reflexive accessibility relations will be called $PKT$; the logic characterized by the class~$\E_{\FUNC}$ of all frames whose accessibility relations are total functions will be called $PKF$; the logic characterized by the class~$\E_{\B}$ of all symmetric frames (those with symmetric accessibility relations) will be called $PKB$.

\subsection{Around classical negation}
\label{aroundCN}

According to the intuitions laid down at \Cref{sec:goals}, one could expect that in general (a) $\uneg\alpha\vdash{\sim}\alpha$ and (b) ${\sim}\alpha\vdash\ineg\alpha$.  It is easy to see that these consecutions are sanctioned by $PKT$, for the classical negation~$\sim$ that may be defined by setting ${\sim}\varphi:=\ineg\varphi\land\wsmile\varphi$ (alternatively, one may set ${\sim}\varphi:=\uneg\varphi\lor\wfrown\varphi$).  

Meanwhile, in the deductively weaker logic $PKD$ one cannot in general prove (a) nor (b), even though a classical negation may be defined in this logic by setting ${\sim}\varphi:=(\uneg\varphi\land\wsmile\varphi)\lor\wfrown\varphi$.  However, one can still easily prove in $PKD$ that (c)~$\uneg\alpha\vdash\ineg\alpha$. 
In the logic $PKF$, deductively stronger than $PKD$ (but neither stronger nor weaker than $PKT$) one may also prove the converse consecution, (d) $\ineg\alpha\vdash\uneg\alpha$.  
Indeed, suppose ${\cal M}, w\Vdash\ineg\alpha$.  There is, by the fact that the accessibility relation is a total function, a single world~$v$ such that $wRv$.  Then ${\cal M}, v\not\Vdash\alpha$, by [S$\ineg$]. For a similar reason, invoking now [S$\uneg$] we conclude that ${\cal M}, w\Vdash\uneg\alpha$.  
Note that (c) and (d) together make our two modal non-classical negations indistinguishable from the viewpoint of $PKF$, yet there would still be no reason for them to collapse into classical negation.

The situation concerning classical negation and its relation to its non-clas\-sical neighbours gets even more interesting if one acknowledges that \textit{no} classical negation is definable in $PK$, the weakest of our logics, but also no classical ne\-gation is definable in the fragment of $PKT$ without neither of the adjustment connectives, or in the fragment of $PKF$ (or $PKD$) without either one of the adjustment connectives, or in $PKB$.  Detailed proofs concerning the mentioned results about (non)definability of classical negation in the weak modal logics that constitute our present object of study may be found in \Cref{sec:definability}. 

Notice that in $PKD$ and its extensions 
there are no negated formulas that happen to be true or false at a given world just because there are no worlds accessible from it.
Note also that the logic $PKT$ is: paraconsistent but not paracomplete with respect to the connective~$\ineg$; paracomplete but not paraconsistent with respect to~$\uneg$ 
(even though we will not prove it here, this logic is indeed the least extension of the positive implicationless fragment of classical logic with the latter mentioned properties). 
The logic $PKT$ will have its word in the following sections, for it also allows for the straightforward application of the techniques that will be hereby illustrated. 
In the other four mentioned logics, in contrast, both non-classical negations behave at once as paracomplete and paraconsistent negations (recall, though, that each is associated to a different adjustment connective).
We take the cases among these in which no classical negation is available to be particularly attractive for the task of revealing the `uncontamined' nature of non-classical negation.  Establishing well-behaved proof theoretical counterparts for such logics, as we shall do in what follows, is meant to allow for them to be better understood and dealt with. 


\section{A proof system for $PK$}
\label{sec:proofsystem}
 A sequent calculus for $PK$, that we denote by $\GPK$, was introduced in~\cite{dod:mar:ENTCS2013}, and consists of the following rules:
%
{\small
\[\begin{array}{ll@{\hspace{2em}}ll}
{[id]} & \ssrul{}{\g,\varphi\Ra \varphi,\d} &
 {[cut]} & \ssrul{\g,\varphi\Ra \d \ \ \ \ \g\Ra \varphi,\d}{\g\Ra \d}
\\
{[W{\Ra}]} & \ssrul{\g\Ra \d}{\g,\varphi\Ra \d}
& {[{\Ra}W]} & \ssrul{\g\Ra\d }{\g\Ra \varphi,\d} \\
{[{\bot}{\Ra}]} & \ssrul{}{\g,\bot\Ra \d} &
{[{\Ra}{\top}]} & \ssrul{}
{\g\Ra\top,\d} 
\\
{[{\w}{\Ra}]} & \ssrul{\g,\varphi,\psi\Ra \d}{\g,\varphi\w \psi\Ra \d} &
{[{\Ra}{\w}]} & \ssrul{\g\Ra \varphi,\d\ \ \ \g\Ra \psi,\d}
{\g\Ra \varphi\w \psi,\d} 
\\
{[{\vee}{\Ra}]} & \ssrul{\g,\varphi\Ra \d\ \ \ \g,\psi\Ra \d}
{\g,\varphi\vee \psi\Ra \d} & 
{[{\Ra}{\vee}]} & \ssrul{\g\Ra \varphi, \psi,\d}{\g\Ra \varphi\vee \psi,\d}
\\[2mm]
{[{\ineg}{\Ra}]} & \ssrul{\g\Ra\varphi, \d}
{\uneg\d,\ineg\varphi\Ra \ineg\g} & 
{[{\Ra}{\uneg}]} &
\ssrul{\g,\varphi\Ra \d}{\uneg\d\Ra \uneg\varphi,\ineg\g}
\\[2mm]
{[{\wsmile}{\Ra}]} & \ssrul{\g\Ra\varphi, \d\ \ \ \g\Ra\ineg\varphi,\d}
{\g,\wsmile\varphi\Ra\d} & 
{[{\Ra}{\wsmile}]} &
\ssrul{\g,\varphi,\ineg\varphi\Ra \d}{\g\Ra\wsmile\varphi,\d}
\\
{[{\wfrown}{\Ra}]} & \ssrul{\g\Ra\varphi,\uneg\varphi,\d}
{\g,\wfrown\varphi\Ra\d} & 
{[{\Ra}{\wfrown}]} &
\ssrul{\g,\varphi\Ra \d\ \ \ \g,\uneg\varphi\Ra\d}{\g\Ra\wfrown\varphi,\d}
\end{array}\]
}

\noindent
Above, sequents are taken to have the form $\Sigma\Ra\Pi$ where~$\Sigma$ and~$\Pi$ are finite sets of formulas, and given a unary connective~$\#$ and $\Psi\subseteq\mathcal{L}$, by $\#\Psi$ we denote the set $\set{\#\psi\st\psi\in\Psi}$.
We write $S\vdash_{\GPK}s$ to say that there is a derivation in $\GPK$ of a sequent~$s$ from a set~$S$ of sequents.
That establishes a consequence relation between sequents. 
A consequence relation between formulas is defined by setting 
$\g\vdash_{\GPK}\varphi$ if $\vdash_{\GPK}\g'\Ra\varphi$ for some finite subset~$\g'$ of~$\g$.  
The overloaded notation  $\vdash_{\GPK}$ will always be resolved by the pertinent context.

Next, we utilize in what follows the general mechanisms and methods applicable to the so-called `basic systems' of~\cite{lah:avr:Unified2013}
in order to prove soundness, completeness and cut-admissibility.
From the viewpoint of basic systems, each sequent is seen as a union of a `main sequent' and a `context sequent'. For example, in ${[{\Ra}{\vee}]}$, the main sequent of the premise is $\Ra\varphi,\psi$; the main sequent of the conclusion is $\Ra\varphi\vee\psi$; and the context sequent of both is $\g\Ra\d$. Note that in the rules for $\ineg$ and $\uneg$, the context sequent of the premise is different from the one of the conclusion. 
Accordingly, \cite{lah:avr:Unified2013} introduces the notion of a \textsl{basic rule}, whose premises
take the form $\tup{s,\pi}$, where~$s$ is a sequent that corresponds to the main sequent of the premise, and $\pi$ is a relation between singleton-sequents (that is, sequents of the form $\varphi\Ra$ or $\Ra\varphi$) called a \textsl{context relation} that determines the behavior of the context sequents. 
The sequent calculus $\GPK$ may be naturally regarded as a basic system that employs two context relations, namely:
$\pi_0=\set{\tup{p_1\Ra\;;\;p_1\Ra},\tup{\Ra p_1\;;\;\Ra p_1}}$, and 
$\pi_1=\set{ \tup{p_1\Ra\;;\;\Ra\ineg p_1}, \tup{\Ra p_1\;;\;\uneg p_1\Ra}}$. 
%
The rules of $\GPK$ may then be presented as particular instances of basic rules.
For example, the following are the basic rules for $\wedge,\ineg,\uneg$ and~$\wsmile$:
\smallskip

\noindent
\(\begin{array}{ll@{\hspace{1.5em}}ll}
{[{\Ra}{\wedge}]} & \tup{\Ra p_1;\pi_0},\tup{\Ra  p_2;\pi_0}\rs \Ra p_1\wedge p_2 &
{[{\wedge}{\Ra}]} & \tup{p_1, p_2\Ra;\pi_0}\rs p_1\wedge p_{2}\Ra \\
{[{\ineg}{\Ra}]} & \tup{\Ra p_1;\pi_1}\rs\ineg p_1\Ra &
{[{\Ra}{\uneg}]} & \tup{p_1\Ra ;\pi_1}\rs\Ra\uneg p_1 \\
{[{\wsmile}{\Ra}]} & \tup{\Ra p_1;\pi_0},\tup{\Ra \ineg p_1;\pi_0}\rs\wsmile p_1\Ra &
{[{\Ra}{\wsmile}]} & \tup{p_1,\ineg p_1\Ra;\pi_0}\rs\Ra\wsmile p_1
\end{array}\)
\smallskip

\noindent
In applications of $[{\Ra}{\wsmile}]$, the context sequent is left unchanged, as two singleton-sequents relate to each other (with respect to~$\pi_{0}$) iff they are the same. 
In contrast, applications of $[{\ineg}{\Ra}]$ are based on~$\pi_{1}$. A sequent $\g_{1}\Ra\d_{1}$ relates (with respect to~$\pi_{1}$) to a sequent $\g_{2}\Ra\d_{2}$ iff $\g_{2}=\uneg\d_{1}$ and $\d_{2}=\ineg\g_{1}$. 

We extend the notion of satisfaction from Section~\ref{sec:intro} to sequents by setting $\M,w\Vdash \g\Ra\d$ if $\M,w\not\Vdash\gamma$ for some $\gamma\in\g$ or $\M,w\Vdash\delta$ for some $\delta\in\d$.
Semantics for $\GPK$ may then be obtained using the general method introduced in~\cite{lah:avr:Unified2013}, by having 
each derivation rule and each context relation match a semantic condition, 
and the semantics of the system is obtained by conjoining all these semantic conditions.
For example, the basic rule $[{\ineg}{\Ra}]$ induces the condition:
``if $\M,v\Vdash\;\Ra\varphi$ for every world $v$ such that $w R v$, then $\M,w\Vdash\ineg\varphi\Ra$'',
which is equivalent to: ``If ${\cal M},w\Vdash \ineg\varphi$  then  ${\cal M}, v\not\Vdash \varphi$ for some $v\in W$ such that $wRv$''.  This is half of clause [S$\ineg$], from Section~\ref{sec:intro}.
Furthermore, the context relation $\pi_{1}$ induces an additional semantic condition: ``if $w R v$ then $\M,w\Vdash\; \Ra\ineg\varphi$ whenever $\M,v\Vdash\varphi\Ra$''. 
This amounts to the other half of clause [S$\ineg$], namely:
``${\cal M},w\Vdash \ineg\varphi$  whenever  ${\cal M}, v\not\Vdash \varphi$ for some $v\in W$ such that $wRv$''. 
Systematically applying this semantic reading to all rules and all context relations of $\GPK$ (according to Definitions 4.5 and 4.12 of \cite{lah:avr:Unified2013}), one obtains soundness and completeness 
with respect to the class of all Kripke models $\tup{\F,V}$, where~$\F$ is an arbitrary frame and each valuation $V:W\times\setS\rightarrow\set{\fff,\ttt}$ respects the following conditions, for every $w\in W$ and $\varphi,\psi\in\setS$:

\smallskip
\noindent
\begin{tabular}{rl}
	{[$\TT\top$]} & $\tru{w}{\top}$\\[.5mm]
	{[$\FF\bot$]} & $\fal{w}{\bot}$\\[.5mm]
\end{tabular} \\
\begin{tabular}{rl}
	{[$\TT\land$]} & if $\tru{w}{\varphi}$ and $\tru{w}{\psi}$, then $\tru{w}{\varphi\w\psi}$\\ 
	{[$\FF\land$]} & if $\fal{w}{\varphi}$ or $\fal{w}{\psi}$, then $\fal{w}{\varphi\w\psi}$
	\\[.5mm] 
\end{tabular} \\
\begin{tabular}{rl}
	{[$\TT\lor$]} & if $\tru{w}{\varphi}$ or $\tru{w}{\psi}$, then $\tru{w}{\varphi\vee\psi}$\\ 
	{[$\FF\lor$]} & if $\fal{w}{\varphi}$ and $\fal{w}{\psi}$, then $\fal{w}{\varphi\vee\psi}$
	\\[.5mm] 
\end{tabular} \\
\begin{tabular}{rl}
	{[$\TT\ineg$]} & if $\fal{v}{\varphi}$ for some $v\in W$ such that $wRv$, then $\tru{w}{\ineg\varphi}$\\ 
	{[$\FF\ineg$]} & if $\tru{v}{\varphi}$ for every $v\in W$ such that $wRv$, then $\fal{w}{\ineg\varphi}$
	\\[.5mm] 
\end{tabular} \\
\begin{tabular}{rl}
	{[$\TT\uneg$]} & if $\fal{v}{\varphi}$ for every $v\in W$ such that $wRv$, then $\tru{w}{\uneg\varphi}$\\
	{[$\FF\uneg$]} & if $\tru{v}{\varphi}$ for some $v\in W$ such that $wRv$, then $\fal{w}{\uneg\varphi}$
	\\[.5mm] 
\end{tabular} \\
\begin{tabular}{rl}
	{[$\TT\wsmile$]} & if $\fal{w}{\varphi}$ or $\fal{w}{\ineg\varphi}$, then $\tru{w}{\wsmile\varphi}$\\
	{[$\FF\wsmile$]} & if $\tru{w}{\varphi}$ and $\tru{w}{\ineg\varphi}$, then $\fal{w}{\wsmile\varphi}$
	\\[.5mm]
\end{tabular} \\
\begin{tabular}{rl}
	{[$\TT\wfrown$]} & if $\fal{w}{\varphi}$ and $\fal{w}{\uneg\varphi}$, then $\tru{w}{\wfrown\varphi}$\\
	{[$\FF\wfrown$]} & if $\tru{w}{\varphi}$ or $\tru{w}{\uneg\varphi}$, then $\fal{w}{\wfrown\varphi}$\\
\end{tabular} 
\medskip

\noindent 
where we take `$\tru{u}{\alpha}$' as abbreviating `$V(u,\alpha)=\ttt$', and `$\fal{u}{\alpha}$' as abbreviating `$V(u,\alpha)=\fff$'.  If alternatively one just \textit{rewrites} $V(v,\alpha)=\ttt$ as ${\cal M}, v\Vdash \alpha$ and rewrites $V(v,\alpha)=\fff$ as ${\cal M}, v\not\Vdash \alpha$, where ${\cal M}=\tup{\tup{W,R},V}$, what results thereby is a collection of conditions that are essentially identical to the [S\#] clauses introduced in our Section~\ref{sec:intro}.

Two brief comments are in order here.
First, our valuation functions 
assign truth-values to \emph{every} formula in every world.  However, as the values of compound formulas are uniquely determined by the values of their subformulas, we could have rested content above with assigning truth-values to propositional variables.
Second, given that for the above valuations $\tru{u}{\alpha}$ is the case iff $\fal{u}{\alpha}$ fails to be the case, the semantic conditions [$\TT\#$] and [$\FF\#$], for each connective~$\#$, are clearly the converse of each other.  
In setting the two conditions apart, 
we have just given them directionality, pointing from less complex to more complex formulas, and have separated between conditions induced by rules from those induced by context relations.
While neither of these manoeuvres are very useful here, they will allow us to more easily relate, in Section~\ref{sec:analyticity}, valuations to `quasi valuations' that have non truth-functional semantics.

Fix in what follows a Kripke model ${\cal M}=\tup{\tup{W,R},V}$.  
We say that $w,v\in W$ \textsl{agree with respect to the formula~$\alpha$, according to~$V$}, if 
either ($\tru{w}{\alpha}$ and $\tru{v}{\alpha}$) or ($\fal{w}{\alpha}$ and $\fal{v}{\alpha}$).
We say that ${\cal M}$ is \textsl{differentiated} if we have $w=v$ whenever~$w$ and~$v$ agree with respect to every $\alpha\in{\cal L}$, according to~$V$.
We call ${\cal M}$ a \textsl{strengthened} model if $w R v$ iff 
($\tru{v}{\alpha}$ implies $\fal{w}{\uneg\alpha}$) and ($\fal{v}{\alpha}$ implies $\tru{w}{\ineg\alpha}$), for every $\alpha\in{\cal L}$.  It is worth stressing that the accessibility relation of a strengthened model is uniquely determined by the underlying collection of worlds and valuation.
The following result follows directly from Corollary~4.26 in~\cite{lah:avr:Unified2013}, thus there is no need to prove it again here:

\begin{theorem}\label{strongerSoundnessAndCompletenessForSmiles}
$\GPK$ is sound and complete with respect to any class of Kripke models that: 
$(i)$ contains only models that satisfy all the above $[\TT\#]$ and $[\FF\#]$ conditions;
and $(ii)$ contains all strengthened differentiated models that satisfy all the above $[\TT\#]$ and $[\FF\#]$ conditions.
\end{theorem}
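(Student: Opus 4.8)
The plan is to read this statement as a direct application of the general soundness-and-completeness theorem for basic systems, so that the bulk of the work lies in \emph{verifying the hypotheses} of that result rather than in a self-contained argument. First I would confirm that $\GPK$ genuinely is a basic system with exactly the two context relations $\pi_0$ and $\pi_1$ displayed above: every rule must be rewritten, as already illustrated, in the form $\tup{s_1;\pi},\dots,\tup{s_n;\pi}\rs s_0$, with the propositional rules and the adjustment rules for $\wsmile,\wfrown$ using $\pi_0$, and the genuinely modal rules $[{\ineg}{\Ra}]$ and $[{\Ra}{\uneg}]$ using $\pi_1$. Then I would run the translation of \cite{lah:avr:Unified2013} that sends each basic rule and each context relation to a semantic clause, and check term by term that the resulting clauses coincide with the listed $[\TT\#]$ and $[\FF\#]$ conditions --- the excerpt already carries out this matching for $[{\ineg}{\Ra}]$ and for $\pi_1$, and the remaining cases are entirely analogous. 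Once this bookkeeping is complete, Corollary~4.26 of \cite{lah:avr:Unified2013} applies verbatim, which is precisely why the authors defer to it.

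For soundness I would argue by induction on derivations that any model $\M$ satisfying condition~$(i)$ (that is, all the $[\TT\#]$/$[\FF\#]$ clauses) satisfies every derivable sequent. The base cases $[id]$, $[{\bot}{\Ra}]$, $[{\Ra}{\top}]$ are immediate, and the propositional and adjustment rules are local, so they follow from the corresponding truth-value clauses at a single world. The only subtle steps are the context-shifting modal rules, where the change of context sequent dictated by $\pi_1$ must be justified using the accessibility relation: this is exactly where the $R$-indexed clauses $[\TT\ineg]$, $[\FF\ineg]$, $[\TT\uneg]$, $[\FF\uneg]$ are invoked, so soundness reduces to the semantic reading already spelled out in the excerpt.

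The real content lies in completeness, and this is where I expect the main obstacle. Given a sequent $s$ that is not derivable in $\GPK$, I would build a countermodel that is simultaneously \emph{strengthened} and \emph{differentiated}, so that by condition~$(ii)$ it belongs to the class under consideration. Following the canonical construction underlying Corollary~4.26, the worlds are taken to be suitably saturated sequents, the valuation reads truth off membership, and the accessibility relation is defined to be the \emph{maximal} relation compatible with the modal clauses --- precisely the defining condition of a strengthened model, namely $wRv$ iff ($\tru{v}{\alpha}$ implies $\fal{w}{\uneg\alpha}$) and ($\fal{v}{\alpha}$ implies $\tru{w}{\ineg\alpha}$) for every $\alpha$. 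Differentiatedness comes for free once distinct saturated worlds are guaranteed to disagree on some formula. The delicate points to check are that this maximally-defined $R$ still validates the existential clauses $[\TT\ineg]$ and $[\TT\uneg]$ (so that no required witness is missing), that all the $[\TT\#]$/$[\FF\#]$ conditions hold in the constructed model, and that $s$ is genuinely refuted at the world obtained from its own saturation.

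In short, the intended proof is to observe that the translation of the first paragraph meets the hypotheses of Corollary~4.26; the induction and canonical construction sketched above are exactly what that corollary packages once and for all, so no independent argument is needed here.
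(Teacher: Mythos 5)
Your proposal matches the paper's own treatment: the paper likewise presents $\GPK$ as a basic system over the context relations $\pi_0$ and $\pi_1$, reads off the $[\TT\#]$/$[\FF\#]$ clauses rule-by-rule (per Definitions 4.5 and 4.12 of the cited work), and then invokes Corollary~4.26 of \cite{lah:avr:Unified2013} directly, with no independent argument. Your sketches of the soundness induction and the canonical strengthened-differentiated countermodel are just an unpacking of what that corollary already provides, which you correctly flag as not needed.
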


This theorem provides a mechanism that will be recycled in the subsequent sections, when we consider extensions of $\GPK$.
The following result from \cite{dod:mar:ENTCS2013} comes as a byproduct of it:

\begin{corollary}\label{soundnessAndCompletenessForSmiles}
$\g\models_{\E}\varphi$ iff $\g\vdash_{\GPK}\varphi$ for every $\g\cup\set{\varphi}\suq\setS$, where ${\cal E}$ denotes the class of all frames.
\end{corollary}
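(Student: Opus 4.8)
The plan is to read the corollary off \Cref{strongerSoundnessAndCompletenessForSmiles} by instantiating it with the most permissive admissible class of models. Let $\CModels_{\E}$ denote the class of all Kripke models $\tup{\tup{W,R},V}$ whose frame is arbitrary and whose valuation satisfies every $[\TT\#]$ and $[\FF\#]$ condition. Since $\tru{u}{\alpha}$ holds exactly when $\fal{u}{\alpha}$ fails, for each connective~$\#$ the two conditions $[\TT\#]$ and $[\FF\#]$ jointly amount to the single biconditional $[\mathrm{S}\#]$; consequently $\CModels_{\E}$ is precisely the class of models --- over the class $\E$ of all frames --- by means of which $\models_{\E}$ was defined in Section~\ref{sec:intro}.

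First I would check that $\CModels_{\E}$ meets both hypotheses of the theorem. Clause~$(i)$ holds by construction, as $\CModels_{\E}$ was defined to comprise only models validating all the $[\TT\#]$ and $[\FF\#]$ conditions. Clause~$(ii)$ holds for the complementary reason: it merely requires the class to \emph{contain} every strengthened differentiated model satisfying those conditions, and since $\CModels_{\E}$ already collects \emph{all} models over all frames that satisfy them, these strengthened differentiated models sit inside it as a subclass. Hence \Cref{strongerSoundnessAndCompletenessForSmiles} applies and delivers that $\GPK$ is sound and complete with respect to $\CModels_{\E}$.

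It then remains to convert this sequent-level adequacy into the formula-level biconditional. For the right-to-left (soundness) direction, assume $\g\vdash_{\GPK}\varphi$; by the definition of the formula consequence there is a finite $\g'\suq\g$ with $\vdash_{\GPK}\g'\Ra\varphi$, so soundness makes $\g'\Ra\varphi$ true at every world of every model in $\CModels_{\E}$. At any world where all of $\g$ is satisfied all of $\g'\suq\g$ is satisfied as well, so $\varphi$ must be satisfied there; this is exactly $\g\models_{\E}\varphi$. For the left-to-right (completeness) direction with $\g$ finite, $\g\models_{\E}\varphi$ says that $\g\Ra\varphi$ is valid in $\CModels_{\E}$, whence completeness yields $\vdash_{\GPK}\g\Ra\varphi$, i.e.\ $\g\vdash_{\GPK}\varphi$.

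The one step calling for genuine care --- and the main obstacle --- is the completeness direction for \emph{infinite}~$\g$, where the mismatch between the finitary relation $\vdash_{\GPK}$ (which quantifies over finite $\g'\suq\g$) and the all-at-once reading of $\models_{\E}$ must be bridged. Here I would lean on the strong completeness supplied by the framework of~\cite{lah:avr:Unified2013}: arguing contrapositively, if no finite $\g'\suq\g$ yields $\vdash_{\GPK}\g'\Ra\varphi$, then the canonical construction produces a single countermodel --- itself a strengthened differentiated model meeting all the conditions, hence a member of $\CModels_{\E}$ by clause~$(ii)$ --- at one of whose worlds every formula of $\g$ is satisfied while $\varphi$ is not, so that $\g\not\models_{\E}\varphi$. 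This simultaneously establishes compactness of $\models_{\E}$ and reduces the infinite case to the finite one already handled. Combining the two directions gives $\g\models_{\E}\varphi$ iff $\g\vdash_{\GPK}\varphi$, as required.
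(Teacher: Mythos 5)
Your proof is correct and takes essentially the same route as the paper, which derives the corollary directly from \Cref{strongerSoundnessAndCompletenessForSmiles} (the paper merely remarks that the result ``comes as a byproduct'' of that theorem, leaving the details implicit). Your elaboration --- instantiating the theorem with the class of all models satisfying the $[\TT\#]$/$[\FF\#]$ conditions (which the paper itself identifies with the $[\mathrm{S}\#]$-models defining $\models_{\E}$), and bridging the finitary $\vdash_{\GPK}$ with infinite $\Gamma$ via the canonical, strengthened differentiated countermodel supplied by the framework of \cite{lah:avr:Unified2013} --- is precisely the argument the paper delegates to its cited sources.
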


\section{(Almost) Free Lunch: cut-elimination and analyticity}
\label{sec:analyticity}
 In this section we make further use of the powerful machinery introduced in~\cite{lah:avr:Unified2013} to prove that PK enjoys strong cut-admissibility, 
in other words, we show that $S\vdash_{\GPK}s$ implies that there is a derivation in $\GPK$ of the sequent~$s$ from the set of sequents~$S$ such that in every application of the cut rule the cut formula~$\varphi$ appears in~$S$.
In particular, $\vdash_{\GPK}s$ implies that~$s$ is derivable in $\GPK$ without any use of the cut rule.
The proof is done in two steps. \textit{First}, we present an adequate 
semantics for the cut-free fragment of $\GPK$.
\textit{Second}, we show that a countermodel in this new semantics entails the existence of a countermodel in the form of a Kripke model as defined in the previous section. This, together with Corollary~\ref{soundnessAndCompletenessForSmiles}, entails that $\GPK$ is equivalent to its cut-free fragment.

\subsection*{Step 1. Semantics for cut-free $\GPK$}

Semantics for cut-free basic systems may be obtained 
through the use of `quasi valuations'. 
Models based on quasi valuations differ from usual Kripke models in two main aspects: 
$(a)$ the underlying interpretation is three-valued; 
$(b)$ the underlying interpretation is non-deterministic --- the truth-value of a compound formula in a given world is \textit{not} always uniquely determined by the truth values of its subformulas in the collection of worlds of the underlying frame.

To obtain such semantics for $\GPK$, as before, one reads off a semantic condition on quasi valuations from each derivation rule and from each context relation.
As per Theorems 5.24 and 5.31 of \cite{lah:avr:Unified2013}, we know that 
the class of models based on quasi valuations that respect all these conditions is sound and complete for the cut-free fragment of $\GPK$. 
Concretely, 
given a frame $\F=\tup{W,R}$, a \textsl{quasi valuation} over it is a function $QV:W\times\setS\rightarrow \set{\set{\fff},\set{\ttt},\set{\fff,\ttt}}$ satisfying precisely the same semantic conditions laid down in Section~\ref{sec:proofsystem}, where we now take `$\tru{u}{\alpha}$' as abbreviating `$\ttt\in V(u,\alpha)$', and `$\fal{u}{\alpha}$' as abbreviating `$\fff\in V(u,\alpha)$'.  
Whenever we need to distinguish between a semantic condition on a tuple $\langle w,\varphi\rangle$ as constraining a valuation~$V$ or a quasi valuation $QV$, we will use $\xval{w}{\varphi}$ for the former and  $\xvalQ{w}{\varphi}$ for the latter, where $\XX\in\{\TT,\FF\}$.
A \textsl{quasi model} is a structure $\Q\M=\tup{\F,QV}$, where $QV$ is a quasi valuation over~$\F$.  
The notions of a differentiated quasi model and of a strengthened quasi model are defined as before, assuming the same abbreviations.

\subsection*{Step 2. Semantic cut-admissibility}

The next step is to show that the existence of a countermodel in the form of a strengthened differentiated quasi model implies the existence a countermodel in the form of an ordinary Kripke model (following Corollary 5.48 of \cite{lah:avr:Unified2013}). 
For this purpose we 
define 
an \textsl{instance} of a quasi model $\Q\M=\tup{\tup{W,R},QV}$ as any model of the form $\M=\tup{\tup{W,R'},V}$ such that 
$\xvalQ{w}{\varphi}$ whenever $\xval{w}{\varphi}$, for every $\XX\in\{\TT,\FF\}$,
every $w\in W$ and every $\varphi\in{\cal L}$.  Note that a quasi model and its instances may have different accessibility relations.

In what follows, the construction of appropriate instances is done by a recursive definition over the following well-founded relation~$\wfr$ on the set of formulas: 
%
$\alpha\wfr\beta$ if either
$(i)$  $\alpha$ is a proper subformula of $\beta$;
$(ii)$ $\alpha=\ineg\gamma$ and $\beta=\wsmile\gamma$ for some $\gamma\in{\cal L}$;
or $(iii)$ $\alpha=\uneg\gamma$ and $\beta=\wfrown\gamma$ for some $\gamma\in{\cal L}$.
%

\begin{lemma}
\label{quasinstance}
	Every quasi model has an instance.
\end{lemma}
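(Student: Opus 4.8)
The plan is to build an instance with the \emph{same} accessibility relation as the given quasi model, so that the only real work is to select, out of the three-valued $QV$, a genuine two-valued valuation respecting all the Kripke clauses. Fix a quasi model $\Q\M=\tup{\tup{W,R},QV}$ and take as candidate instance $\M=\tup{\tup{W,R},V}$, where $V:W\times\setS\to\set{\fff,\ttt}$ is defined by recursion along $\wfr$. On propositional variables pick any $V(w,p)\in QV(w,p)$ (possible since $QV(w,p)\neq\emptyset$); set $V(w,\top)=\ttt$ and $V(w,\bot)=\fff$; and on every compound formula let $V$ be forced by the ordinary biconditional Kripke clauses of Section~\ref{sec:proofsystem} — for instance $V(w,\ineg\varphi)=\ttt$ iff $V(v,\varphi)=\fff$ for some $v$ with $wRv$, and $V(w,\wsmile\varphi)=\ttt$ iff $V(w,\varphi)=\fff$ or $V(w,\ineg\varphi)=\fff$, and analogously for the remaining connectives. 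Each such clause determines $V(w,\beta)$ from values $V(\cdot,\alpha)$ of formulas $\alpha\wfr\beta$, so the recursion is legitimate.

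Since $V$ is defined by the \emph{biconditional} clauses, it satisfies both halves $[\TT\#]$ and $[\FF\#]$ of every condition, whence $\M$ is a genuine Kripke model. It then remains to verify the defining property of an instance: that $V(w,\varphi)\in QV(w,\varphi)$ for all $w\in W$ and $\varphi\in\setS$, which is exactly the requirement $\xvalQ{w}{\varphi}$ whenever $\xval{w}{\varphi}$, for $\XX\in\set{\TT,\FF}$. I would prove this by induction on $\wfr$, connective by connective. The unifying observation is that the two one-directional quasi-valuation conditions dovetail precisely with the two possible forced values: whenever the defining clause forces $V(w,\beta)=\ttt$, the hypotheses on the $\wfr$-smaller arguments needed to fire the $[\TT\#]$-condition for the principal connective of $\beta$ hold for $QV$ by the induction hypothesis, giving $\ttt\in QV(w,\beta)$; dually, a forced value $V(w,\beta)=\fff$ triggers the matching $[\FF\#]$-condition and yields $\fff\in QV(w,\beta)$.

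Concretely, for the representative modal case: if $V(w,\ineg\varphi)=\ttt$ then $V(v,\varphi)=\fff$ for some $v$ with $wRv$, so $\fff\in QV(v,\varphi)$ by the induction hypothesis, i.e.\ $\fal{v}{\varphi}$ over $QV$, and $[\TT\ineg]$ gives $\tru{w}{\ineg\varphi}$, that is $\ttt\in QV(w,\ineg\varphi)$; if instead $V(w,\ineg\varphi)=\fff$ then $V(v,\varphi)=\ttt$ for every such $v$, so $\tru{v}{\varphi}$ over $QV$, and $[\FF\ineg]$ gives $\fal{w}{\ineg\varphi}$. The $\land$, $\lor$, $\uneg$, $\wsmile$ and $\wfrown$ cases run in exactly the same way, the adjustment cases additionally appealing to the induction hypothesis at $\ineg\varphi$ (resp.\ $\uneg\varphi$), which is licensed precisely because $\ineg\varphi\wfr\wsmile\varphi$ (resp.\ $\uneg\varphi\wfr\wfrown\varphi$). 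The base cases $\top,\bot$ and propositional variables are immediate from $[\TT\top]$, $[\FF\bot]$ and the choice $V(w,p)\in QV(w,p)$.

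I expect the only delicate points to be bookkeeping ones. First, the recursion and induction cannot be run on the subformula ordering alone: the clause for $\wsmile\varphi$ consults $\ineg\varphi$, which is \emph{not} a subformula of $\wsmile\varphi$, and likewise $\wfrown\varphi$ consults $\uneg\varphi$ — this is exactly what clauses $(ii)$ and $(iii)$ in the definition of $\wfr$ repair, so it is the well-foundedness of $\wfr$ that makes both the definition of $V$ and the induction legitimate. Second, one should confirm that keeping $R'=R$ is genuinely adequate here; the two modal subcases above already close with $R'=R$, since for $PK$ the conditions $[\TT\ineg],[\FF\ineg],[\TT\uneg],[\FF\uneg]$ impose nothing on worlds beyond what $R$ itself carries. (It is only for the frame classes treated later that $R$ must be replaced by a tighter relation, which is why the notion of instance is stated so as to permit $R'\neq R$.) No other case presents any obstacle.
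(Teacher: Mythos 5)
Your proof is correct, and it reaches the lemma by a decomposition dual to the paper's. The paper gives $QV$ priority: its valuation copies $QV$ wherever $QV$ is determinate (clauses (R1), (R2)), and the compositional clauses (M1)--(M7) act only as tie-breakers in the ambiguous case (R3); consequently the instance property $\xval{w}{\varphi}\Rightarrow\xvalQ{w}{\varphi}$ holds by construction (the paper dismisses it with ``obviously''), and the entire effort goes into checking that the resulting $V$ satisfies the $[\TT\#]$/$[\FF\#]$ conditions, i.e.\ that $\tup{\F,V}$ is a model. You invert this: your $V$ is the unique compositional Kripke valuation over an atomic selection inside $QV$, so model-hood is free, and the effort goes into the $\wfr$-induction deriving the instance property from the one-directional quasi-valuation conditions. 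The two arguments exploit exactly the same dovetailing --- the quasi conditions point from $\wfr$-smaller to $\wfr$-larger formulas, matching the direction of the compositional clauses --- and both need $\wfr$ rather than the plain subformula order for the reason you flag: $\ineg\varphi$ feeds $\wsmile\varphi$ and $\uneg\varphi$ feeds $\wfrown\varphi$. For this lemma your route is arguably tidier: one induction with two symmetric cases per connective, versus the paper's case split over (R1)/(R2)/(R3) together with checking which (M)-clause applies or fails to apply. What the paper's ordering buys is downstream reuse: in the functional case (Lemma~\ref{func-instance}) the very same construction is repeated with only (M4), (M5), (M7) adjusted and the relation shrunk to a total function $R'\suq R$, the instance property remaining automatic; your scheme adapts there as well, but the induction must then invoke the strengthened $\FUNC$ quasi-model conditions, since the plain $[\FF\ineg]$ quantifies over all $R$-successors rather than the single chosen $R'$-successor. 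Both constructions, as you correctly note, keep $R'=R$ for the present lemma, and the freedom you exercise on atoms is present in the paper too, where (M1) merely defaults ambiguous atoms to true.
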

\begin{proof}
	Let $\Q\M=\tup{\F,QV}$ be a quasi model based on a frame $\F=\tup{W,R}$. 
We set us now an appropriate valuation $V:W\times\setS\rightarrow\set{\fff,\ttt}$.
For every world~$w$ and formula~$\varphi$, the valuation~$V$ is inductively defined (with respect to~$\wfr$) on~$\varphi$ as follows: 
(R1) if $\truQ{w}{\varphi}$ fails for $QV$, we postulate $\fal{w}{\varphi}$ to be the case for~$V$; 
(R2) if $\falQ{w}{\varphi}$ fails for $QV$, we postulate $\tru{w}{\varphi}$ to be the case for~$V$; 
(R3) otherwise both $\truQ{w}{\varphi}$ and $\falQ{w}{\varphi}$ hold good for $QV$, and in this case we postulate $\tru{w}{\varphi}$ to be the case for~$V$ if one of the following holds:
\smallskip

\noindent
\begin{tabular}{rl}
  {(M1)} & $\varphi$ is a propositional variable or $\varphi$ is $\top$\\
  {(M2)} & $\varphi=\varphi_{1}\w\varphi_{2}$, and both $\tru{w}{\varphi_{1}}$ and $\tru{w}{\varphi_{2}}$\\
  {(M3)} & $\varphi=\varphi_{1}\vee\varphi_{2}$, and either $\tru{w}{\varphi_{1}}$ or $\tru{w}{\varphi_{2}}$\\
  {(M4)} & $\varphi=\ineg\psi$, and $\fal{v}{\psi}$ for some $v\in W$ such that $w R v$\\
  {(M5)} & $\varphi=\uneg\psi$, and $\fal{v}{\psi}$ for every $v\in W$ such that $w R v$\\
  {(M6)} & $\varphi=\wsmile\psi$, and either $\fal{w}{\psi}$ or $\fal{w}{\ineg\psi}$\\
  {(M7)} & $\varphi=\wfrown\psi$, and both $\fal{w}{\psi}$ and $\fal{w}{\uneg\psi}$\\
\end{tabular}\smallskip
	
\noindent 
Otherwise, we postulate $\fal{w}{\varphi}$ to be the case for~$V$.
Obviously, $\xval{w}{\varphi}$ implies $\xvalQ{w}{\varphi}$ for every $w\in W$, every $\varphi\in{\cal L}$ and every $\XX\in\{\TT,\FF\}$.
It is routine to verify that $\tup{\F,V}$ is a model. 
We show here that the semantic conditions for $\ineg$ and $\wsmile$ hold:\\
\noindent
\textbf{[Case of $\ineg$]} 
Let $\psi\in{\cal L}$.  
Suppose first that $\fal{v}{\psi}$ is the case for some~$v\in W$ such that $wRv$.  Then $\falQ{v}{\psi}$.  Since $\Q\M$ is a quasi model, then $\truQ{w}{\ineg\psi}$ is the case.  If, on the one hand, $\falQ{w}{\ineg\psi}$ fails, then we must have $\tru{w}{\ineg\psi}$, by (R2).  If, on the other hand, neither $\truQ{w}{\ineg\psi}$ nor $\falQ{w}{\ineg\psi}$ fail, we are in case (R3).  Since we have $\fal{v}{\psi}$ and $wRv$ we conclude by (M4) that $\tru{w}{\ineg\psi}$ must be the case.  
Suppose now that $\tru{v}{\psi}$ is the case for every world~$v$ such that $wRv$.  Then we have $\truQ{v}{\psi}$ for every such world.  Since $\Q\M$ is a quasi model, it follows that $\falQ{w}{\ineg\psi}$ is the case.  If, on the one hand, $\truQ{w}{\ineg\psi}$ fails, then we must have $\fal{w}{\ineg\psi}$, by (R1). If, on the other hand, neither $\truQ{w}{\ineg\psi}$ nor $\falQ{w}{\ineg\psi}$ fail, we are in case (R3).  Since we have $\tru{v}{\psi}$ for every world $v$ such that $wRv$ we conclude that none of (M1)--(M7) applies, thus $\fal{w}{\ineg\psi}$ must be the case.  \\
\textbf{[Case of }$\wsmile$\textbf{]}
Let $\psi\in\setS$. 
Suppose first that either $\fal{w}{\psi}$ or $\fal{w}{\ineg\psi}$ are the case for some $w\in W$. Then either $\falQ{w}{\psi}$ or $\falQ{w}{\ineg\psi}$. Since $\Q\M$ is a quasi model, it follows that $\truQ{w}{\wsmile\psi}$. If, on the one hand, $\falQ{w}{\wsmile\psi}$ fails, then we must have $\tru{w}{\wsmile\psi}$, by (R2). If, on the other hand, neither $\truQ{w}{\wsmile\psi}$ nor $\falQ{w}{\wsmile\psi}$ fail, we are in case (R3) and we conclude by (M6) that $\tru{w}{\wsmile\psi}$ must be the case.
Suppose now that both $\tru{w}{\psi}$ and $\tru{w}{\ineg\psi}$ are the case for some $w\in W$. Then $\truQ{w}{\psi}$ and $\truQ{w}{\ineg\psi}$. Since $\Q\M$ is a quasi model, then $\falQ{w}{\wsmile\psi}$. If, on the one hand, $\truQ{w}{\wsmile\psi}$ fails, then we must have $\fal{w}{\wsmile\psi}$, by (R1). If, on the other hand, neither $\truQ{w}{\wsmile\psi}$ nor $\falQ{w}{\wsmile\psi}$ fail, 
we are in case (R3) and $\fal{w}{\wsmile\psi}$ must be the case because none of (M1)--(M7) applies.
\end{proof}

Since the class of all quasi models contains the strengthened differentiated quasi models, it follows that:

\begin{corollary}
\label{gpkcut}
	$\GPK$ enjoys strong cut-admissibility. 
\end{corollary}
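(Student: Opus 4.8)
The plan is to complete the two-step argument announced at the opening of this section by invoking the general cut-admissibility mechanism of~\cite{lah:avr:Unified2013} (Corollary~5.48). That mechanism reduces strong cut-admissibility of a basic system to a single semantic requirement: that any refutation witnessed by a strengthened differentiated quasi model can be reproduced by an ordinary Kripke model. Since essentially all of the substantive work has already been done, what remains is to assemble the pieces.

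First I would recall the two adequacy results now at our disposal. On the cut-free side, Theorems~5.24 and~5.31 of~\cite{lah:avr:Unified2013} tell us that the cut-free fragment of $\GPK$ is sound and complete with respect to the class of (strengthened differentiated) quasi models that respect all the $[\TT\#]$ and $[\FF\#]$ conditions. On the side of the full system, Theorem~\ref{strongerSoundnessAndCompletenessForSmiles} and Corollary~\ref{soundnessAndCompletenessForSmiles} tell us that $\GPK$, with cut, is sound and complete with respect to ordinary Kripke models satisfying those same conditions. The only gap separating the two is the passage from the three-valued, non-deterministic quasi valuations to ordinary two-valued valuations.

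That gap is exactly what Lemma~\ref{quasinstance} closes. Given a strengthened differentiated quasi model $\Q\M$ that refutes a sequent $s$ from a set $S$ at some world, its instance $\M=\tup{\tup{W,R'},V}$ is an ordinary Kripke model satisfying all the $[\TT\#]$ and $[\FF\#]$ conditions, and it refines the quasi valuation in the sense that $\xval{w}{\varphi}$ implies $\xvalQ{w}{\varphi}$ for every world $w$ and every $\varphi$. Because the instance's two-valued valuation always selects a truth-value already present in the quasi valuation at each world-formula pair, every formula declared true (respectively, false) at a world by $V$ was available as true (respectively, false) there in $QV$; hence any refutation of $s$, together with any satisfaction of the sequents in $S$, exhibited by $\Q\M$ is preserved by $\M$. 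The instance is therefore an ordinary Kripke countermodel of the same refutation.

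Assembling these, I would argue contrapositively: were $s$ not derivable from $S$ by an $S$-restricted-cut derivation, cut-free completeness would furnish a strengthened differentiated quasi-model countermodel, whose instance is, by the previous paragraph, an ordinary Kripke countermodel to $S\vdash_{\GPK}s$, contradicting soundness of the full system. This is precisely the content of Corollary~5.48 of~\cite{lah:avr:Unified2013} applied to $\GPK$, whose sole hypothesis --- that every strengthened differentiated quasi model admits an instance --- is delivered by Lemma~\ref{quasinstance} together with the observation that the strengthened differentiated quasi models form a subclass of all quasi models. The main obstacle was never the corollary itself, which is a routine instantiation of the general framework, but rather the instance construction already carried out in Lemma~\ref{quasinstance}: the recursive definition of $V$ along the well-founded order $\wfr$ and the verification, through the case analysis (R1)--(R3) and (M1)--(M7), that $V$ respects every semantic condition while the induced accessibility relation behaves as required.
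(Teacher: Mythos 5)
Your proposal is correct and takes essentially the same route as the paper: both proofs consist of observing that strengthened differentiated quasi models form a subclass of all quasi models, so Lemma~\ref{quasinstance} supplies the instances required by Corollary~5.48 of \cite{lah:avr:Unified2013}, which then yields strong cut-admissibility for $\GPK$. One minor caveat on your explanatory gloss (which does not affect the argument, since the relevant bookkeeping lives inside the cited corollary): the refinement property that $\xval{w}{\varphi}$ implies $\xvalQ{w}{\varphi}$ only guarantees, by itself, that \emph{refutations} of sequents transfer from the quasi model to its instance, whereas preservation of the \emph{satisfaction} of the sequents in $S$ is not automatic from this direction alone and relies on how the framework of \cite{lah:avr:Unified2013} treats the formulas on which cuts are permitted.
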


\begin{corollary}\label{PKanalyticity}
$\GPK$ is $\wfr$-analytic: If a sequent $s$ is derivable from a set $S$ of sequents in $\GPK$, then there is a derivation of $s$ from $S$ 
such that every formula $\varphi$ that occurs in the derivation satisfies $\varphi\wfr\psi$ for some $\psi$ in $S\cup s$.
\end{corollary}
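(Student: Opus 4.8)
The plan is to obtain $\wfr$-analyticity by reusing the cut-admissibility machinery of Step~2 while restricting attention to a finite, $\wfr$-closed set of relevant formulas, taking advantage of the fact that $\wfr$ is precisely the well-founded relation along which the instance of Lemma~\ref{quasinstance} is computed. Fix finite sets $S$ and $\set{s}$ of sequents and let $\Phi$ be the downward closure under $\wfr$ of the formulas occurring in $S\cup s$. As $\wfr$ is well-founded and each formula has only finitely many $\wfr$-predecessors (its subformulas, together with $\ineg\gamma$ when it is $\wsmile\gamma$ and $\uneg\gamma$ when it is $\wfrown\gamma$), the set $\Phi$ is finite; and it is $\wfr$-closed by construction, meaning $\alpha\in\Phi$ whenever $\alpha\wfr\chi$ for some $\chi\in\Phi$. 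Membership in $\Phi$ is exactly the $\wfr$-analyticity condition to be established.

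The first ingredient is that every rule of $\GPK$ is \emph{$\wfr$-local}: each formula occurring in a premise either occurs in the conclusion or is $\wfr$-below some formula of the conclusion. For $[id]$, $[cut]$ and the positive rules this is just the subformula relation; for $[{\ineg}{\Ra}]$ and $[{\Ra}{\uneg}]$ the premise contexts $\g,\d$ are subformulas of the conclusion contexts $\ineg\g,\uneg\d$, while the active $\varphi$ is a subformula of $\ineg\varphi$ and of $\uneg\varphi$; and for the adjustment rules the only premise formula that fails to be a subformula of the principal formula is $\ineg\varphi$ in the rules for $\wsmile$ and $\uneg\varphi$ in the rules for $\wfrown$, and these are covered precisely by clauses $(ii)$ and $(iii)$ in the definition of $\wfr$. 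Because of $\wfr$-locality, the cut-free completeness of $\GPK$ with respect to quasi models (Theorems~5.24 and~5.31 of~\cite{lah:avr:Unified2013}, as invoked in Step~1) relativizes to $\Phi$: if $s$ admits no cut-free derivation from $S$ all of whose formulas lie in $\Phi$, then the canonical construction yields a \emph{partial} quasi valuation, defined only on $\Phi$, that validates $S$ but refutes $s$ at some world, with all relevant instances of the $[\TT\#]$/$[\FF\#]$ conditions remaining internal to $\Phi$.

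The second ingredient extends this partial quasi model to a total one. Since $\Phi$ is $\wfr$-closed and each $[\TT\#]$/$[\FF\#]$ condition constrains the quasi-value of a compound formula only through $\wfr$-smaller formulas, assigning the value $\set{\fff,\ttt}$ to every formula outside $\Phi$ makes the consequent of every still-unverified condition hold vacuously (as both $\truQ{w}{\varphi}$ and $\falQ{w}{\varphi}$ then hold), while the conditions whose compound formula lies in $\Phi$ are untouched, their antecedents involving only formulas of $\Phi$; the result is a genuine quasi model that still validates $S$ and refutes $s$. Lemma~\ref{quasinstance} then supplies an ordinary Kripke instance $\M$; exactly as in the proof of Corollary~\ref{gpkcut}, $\M$ preserves the refutation of $s$ (the witnessing world assigns determinate quasi-values $\set{\ttt}$ to the antecedents and $\set{\fff}$ to the succedents of $s$, which the instance cannot alter) and likewise preserves the satisfaction of $S$. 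By soundness (Theorem~\ref{strongerSoundnessAndCompletenessForSmiles}) this contradicts $S\vdash_{\GPK}s$, whence $s$ does have a derivation from $S$ using only formulas of $\Phi$, which is the claimed $\wfr$-analyticity.

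I expect the main obstacle to be this second ingredient: confirming that padding with $\set{\fff,\ttt}$ outside $\Phi$ really yields a quasi model, and that the instance of Lemma~\ref{quasinstance} leaves the $\Phi$-part governing $S$ and $s$ intact. This is the step where the $\wfr$-closedness of $\Phi$ must be matched exactly against the ``upward'' direction of the semantic clauses, and where one leans on the fact that the instance is computed by $\wfr$-recursion through clauses (M1)--(M7). The $\wfr$-locality of the rules, by contrast, is a routine case inspection that clauses $(ii)$ and $(iii)$ were designed to make succeed.
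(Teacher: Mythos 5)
Your core combinatorial observation---that every rule of $\GPK$ except cut passes to its premises only formulas that are $\wfr$-below (or identical to) formulas of its conclusion---is exactly right, but the semantic superstructure you erect on top of it has a genuine gap, located precisely at the step you dispatch with ``likewise''. You refute the existence of a \emph{cut-free} $\Phi$-derivation, whereas $\wfr$-analyticity asserts (and can only assert) the existence of a derivation whose cuts are on formulas of $\Phi$, in particular on formulas of $S$; the discrepancy is fatal, not cosmetic. Take $S=\set{\Ra p,\; p\Ra q}$ and $s=(\Ra q)$: then $S\vdash_{\GPK}s$ by a single cut on $p$, and that derivation is already $\wfr$-analytic, yet \emph{no} cut-free derivation of $s$ from $S$ exists at all (the only candidate last rule is right weakening from the empty sequent, which is underivable from the satisfiable $S$; every other non-cut rule concludes either with a compound principal formula or with a shared formula on both sides). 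Hence your chain ``no cut-free $\Phi$-derivation $\Rightarrow$ quasi countermodel $\Rightarrow$ Kripke instance $\Rightarrow$ $S\not\vdash_{\GPK}s$'' must break somewhere, and it breaks at the preservation of $S$ under instance-taking. In this example the canonical world refuting $\Ra q$ is forced to give $p$ the quasi-value $\set{\fff,\ttt}$, since both $p\Ra q$ and $\Ra p,q$ are cut-free derivable from $S$ and so $p$ can be consistently added to neither side of the maximal underivable sequent; the resulting quasi model does quasi-satisfy both sequents of $S$, but \emph{every} instance decides $p$, and whichever way it decides it falsifies one of them. Satisfaction of a premise sequent survives the passage to an instance only when its formulas carry determinate quasi-values, and that determinacy is exactly what allowing cuts on formulas of $S$ buys in the canonical construction: if both $\Omega,\varphi\Ra\Theta$ and $\Omega\Ra\varphi,\Theta$ were derivable with such cuts, then $\Omega\Ra\Theta$ would be too, so every maximal underivable sequent decides every formula of $S$. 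Your parenthetical justification covers only $s$, whose formulas are determinate by construction; for $S$ the claim is false in the setting you chose. (The padding step you singled out as the main obstacle is, by contrast, unproblematic: all the $[\TT\#]$/$[\FF\#]$ conditions point from $\wfr$-smaller to $\wfr$-larger formulas.)

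Once repaired---refute ``derivations with all formulas in $\Phi$ and cuts on formulas of $S$'', and record determinacy of the quasi-values on $S$-formulas as an explicit output of the canonical construction---your argument does go through, and it is essentially the general semantic analyticity argument of \cite{lah:avr:Unified2013} (compare the appeal to its Corollary 5.44 for $\GPKB$ in Section~\ref{sec:extensions}). But you should know that the paper's own proof of \Cref{PKanalyticity} needs none of this: it takes strong cut-admissibility (\Cref{gpkcut}) as already established and performs an induction on the length of a derivation in which every cut formula occurs in $S$; your $\wfr$-locality observation, which you called routine, \emph{is} the entire inductive step. All the semantic work---quasi models, padding, \Cref{quasinstance}---was already spent once in proving \Cref{gpkcut}, and analyticity then falls out purely syntactically.
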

\begin{proof}
By induction on the length of the derivation of $s$ from $S$ in $\GPK$:
In all rules except for $(cut)$, the premises include only formulas~$\varphi$ that satisfy $\varphi\wfr\psi$ for some formula~$\psi$ in the conclusion. 
\end{proof}

\section{Some special classes of frames}
\label{sec:extensions}
 In this section we present three very natural deductive extensions of $\GPK$.
Given a property~$X$ of binary relations, we call a frame $\tup{W,R}$ an {\em $X$~frame} if~$R$ enjoys~$X$. A (quasi) model $\tup{\F,V}$ is called an {\em $X$~$($quasi$)$ model} if $\F$ is an~$X$~frame. In addition, and similarly to what we did in the case of $\GPK$, for every proof system $Y$ we write $S\vdash_{Y}s$ if there is a derivation of $s$ from $S$ in $Y$.

\subsection{Seriality}

Let $\GPKD$ be the system obtained
by augmenting $\GPK$ with the following rule: 
{\small
$$[\D] \quad \dfrac{\Gamma\Ra\Delta}{\uneg\Delta\Ra\ineg\Gamma}$$
}%
This rule may be formulated as the basic rule:
$\tup{\Ra\;;\;\pi_{1}}\rs\Ra$. 
Since its premise  is the empty sequent, the semantic condition it imposes (following~\cite{lah:avr:Unified2013}) is seriality: indeed, respecting $[\D]$ in a world $w$ of a model $\M$ based on a frame $\tup{W,R}$ means that if $\M,v\Vdash\; \Ra$ for every world $v$ such that $w R v$, then also $\M,w\Vdash\; \Ra$. Since the empty sequent is not satisfied at any world, this condition would hold iff for every world $w$ there exists a world $v$ such that $w R v$.  A~similar argument shows that every serial frame satisfies this semantic condition.

As in Corollary \ref{soundnessAndCompletenessForSmiles}, we obtain a completeness theorem for $\GPKD$ with respect to serial models:
\begin{corollary}\label{soundnessAndCompletenessForSmilesatKD}
$\g\models_{\E_{\D}}\varphi$ iff $\g\vdash_{\GPKD}\varphi$ for every $\g\cup\set{\varphi}\suq\setS$, where $\E_{\D}$ is the class of serial models. 
\end{corollary}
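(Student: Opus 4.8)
The plan is to imitate the argument behind Corollary~\ref{soundnessAndCompletenessForSmiles} step by step, now accounting for the extra rule $[\D]$. First I would observe that $\GPKD$ is again a basic system: it consists of all the basic rules of $\GPK$ together with the single basic rule $\tup{\Ra\;;\;\pi_{1}}\rs\Ra$ corresponding to $[\D]$. Consequently the general soundness-and-completeness machinery of~\cite{lah:avr:Unified2013} applies \emph{mutatis mutandis}, yielding the exact analogue of Theorem~\ref{strongerSoundnessAndCompletenessForSmiles}: $\GPKD$ is sound and complete with respect to any class of Kripke models that $(i)$ contains only models satisfying all the $[\TT\#]$ and $[\FF\#]$ conditions \emph{together with} the semantic condition induced by $[\D]$, and $(ii)$ contains all strengthened differentiated models satisfying those same conditions.

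The second step is to identify the semantic condition induced by $[\D]$. As already spelled out just before the statement, respecting the basic rule $\tup{\Ra\;;\;\pi_{1}}\rs\Ra$ at a world $w$ amounts to seriality, i.e.\ to the requirement that $w$ possess at least one $R$-successor; conversely every serial frame validates this condition. Hence the class of Kripke models satisfying all of $[\TT\#]$, $[\FF\#]$ and the $[\D]$-condition is precisely the class of serial models whose valuations obey the [S\#] clauses. Recalling (from Section~\ref{sec:proofsystem}) that a two-valued valuation obeying the $[\TT\#]$/$[\FF\#]$ conditions is exactly a Kripke valuation in the sense of Section~\ref{sec:intro}, this class coincides with $\E_{\D}$, the class of all serial models, and $\models_{\E_{\D}}$ is exactly entailment over it.

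It then remains to check that $\E_{\D}$ itself meets requirements $(i)$ and $(ii)$ of the completeness theorem. Requirement $(i)$ is immediate, since every serial model satisfies the $[\TT\#]$/$[\FF\#]$ conditions and, being serial, the $[\D]$-condition. For $(ii)$ I would note that any strengthened differentiated model satisfying all these conditions is in particular serial---its frame validates the seriality condition---and is therefore a member of $\E_{\D}$. With both clauses verified, the general theorem gives soundness and completeness of $\GPKD$ with respect to $\E_{\D}$, which upon passing from sequents $\Ra\varphi$ back to formulas (as in the definition of $\vdash_{\GPKD}$ and in Corollary~\ref{soundnessAndCompletenessForSmiles}) yields exactly $\g\models_{\E_{\D}}\varphi$ iff $\g\vdash_{\GPKD}\varphi$.

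The only genuinely delicate point, and the one I would dwell on, is the equivalence between the $[\D]$-induced semantic condition on valuations and plain seriality of the accessibility relation; everything else is a faithful transcription of the $PK$ argument. It is precisely this equivalence that guarantees that the strengthened differentiated countermodels produced by the general completeness proof live on genuinely serial frames, so that a failure of $\g\vdash_{\GPKD}\varphi$ is witnessed inside $\E_{\D}$ and not merely in some larger class. Since that equivalence was already established in the discussion preceding the statement, the corollary follows.
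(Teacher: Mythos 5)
Your proposal is correct and follows essentially the same route as the paper: the paper likewise regards $[\D]$ as the basic rule $\tup{\Ra\;;\;\pi_{1}}\rs\Ra$, identifies its induced semantic condition with seriality (both directions), and then invokes the general machinery behind Theorem~\ref{strongerSoundnessAndCompletenessForSmiles} and Corollary~\ref{soundnessAndCompletenessForSmiles} ``as in Corollary~\ref{soundnessAndCompletenessForSmiles}''. You have merely spelled out explicitly the verification of clauses $(i)$ and $(ii)$ that the paper leaves implicit in that one-line appeal.
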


Additionally, we may prove cut-admissibility also for $\GPKD$, going through serial quasi models.

\begin{lemma}
\label{Dinstance}
	Every  serial quasi model has a serial instance.
\end{lemma}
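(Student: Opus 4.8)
The plan is to reuse the instance construction from the proof of \Cref{quasinstance} and to observe that it is frame-preserving: the valuation it produces lives over the very same frame as the given quasi model, so any frame property of that quasi model --- seriality in particular --- is automatically inherited by the resulting instance. Accordingly, I would not re-run the inductive definition of~$V$ from scratch; I would instead point out that it never touches the accessibility relation, and then let seriality ride along for free.

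Concretely, let $\Q\M=\tup{\tup{W,R},QV}$ be a serial quasi model, so that~$R$ is serial: for every $w\in W$ there is some $v\in W$ with $wRv$. Applying the recursive definition of~$V$ given in \Cref{quasinstance} over the frame $\F=\tup{W,R}$ yields a model $\M=\tup{\F,V}=\tup{\tup{W,R},V}$. That~$\M$ is an instance of~$\Q\M$ --- that is, that $\xvalQ{w}{\varphi}$ holds whenever $\xval{w}{\varphi}$ does, for every $\XX\in\set{\TT,\FF}$, every $w\in W$ and every $\varphi\in\setS$ --- was already established in \Cref{quasinstance}, and that argument makes no use of seriality. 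It therefore remains only to note that~$\M$ and~$\Q\M$ share the identical frame, hence the identical accessibility relation~$R$; since~$R$ is serial, $\M$ is a serial model, and so a serial instance of~$\Q\M$, as desired.

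The only point that genuinely needs checking is that the construction of \Cref{quasinstance} really does leave~$R$ untouched, and this is immediate by inspection: clauses (M4) and (M5) evaluate $\ineg\psi$ and $\uneg\psi$ at~$w$ by consulting the value of~$\psi$ at the $R$-successors of~$w$ using the same~$R$, and the valuation~$V$ so defined is a valuation over $\tup{W,R}$; no clause modifies~$R$ or introduces a new relation. Consequently there is essentially no obstacle here: the content of the lemma amounts to the observation that the generic instance construction is frame-preserving, so that the seriality condition imposed by the rule $[\D]$ on~$\Q\M$ transfers verbatim to~$\M$. This is exactly what is needed to feed into the cut-admissibility argument for $\GPKD$, as it converts a strengthened differentiated serial quasi countermodel into an ordinary serial Kripke countermodel.
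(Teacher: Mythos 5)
Your proposal is correct and matches the paper's own proof exactly: the paper likewise observes that the construction of Lemma~\ref{quasinstance} uses no property of the accessibility relation and that the resulting instance keeps the same accessibility relation, so seriality carries over unchanged.
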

\begin{proof}
The proof is the same as the proof of Lemma \ref{quasinstance}.
Note indeed that no property of the accessibility relation was assumed, and the constructed instance 
has the same accessibility relation as the original quasi model.
\end{proof}


\begin{corollary}
	$\GPKD$ enjoys cut-admissibility and is $\wfr$-analytic.
\end{corollary}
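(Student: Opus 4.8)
The final statement to prove is the corollary asserting that $\GPKD$ enjoys cut-admissibility and is $\wfr$-analytic. The plan is to mimic exactly the two-step argument that established these properties for $\GPK$ in \Cref{sec:analyticity}, but now relativized to the class of serial frames. The whole point is that the general machinery of \cite{lah:avr:Unified2013} treats $\GPKD$ as another basic system, and the additional rule $[\D]$ corresponds semantically to the seriality condition, so the same recipe applies provided the instance construction can be carried out without leaving the class of serial models.

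First I would invoke the semantics for the cut-free fragment of $\GPKD$ in terms of serial quasi models, obtained exactly as in Step 1 of \Cref{sec:analyticity}: by Theorems 5.24 and 5.31 of \cite{lah:avr:Unified2013}, the class of serial quasi models respecting all the $[\TT\#]$ and $[\FF\#]$ conditions (together with the seriality condition induced by $[\D]$) is sound and complete for the cut-free fragment of $\GPKD$. Second, I would show that a countermodel in the form of a strengthened differentiated \emph{serial} quasi model yields a countermodel in the form of an ordinary serial Kripke model; this is precisely the content of \Cref{Dinstance}, whose proof observes that the instance construction of \Cref{quasinstance} neither assumes nor disturbs any property of the accessibility relation, so the instance of a serial quasi model is again serial. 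Combining this with \Cref{soundnessAndCompletenessForSmilesatKD} (the soundness and completeness of $\GPKD$ with respect to serial models) then forces $\GPKD$ to coincide with its cut-free fragment, giving cut-admissibility. Analyticity follows by the same induction on derivation length used in \Cref{PKanalyticity}: every rule of $\GPKD$ other than $(cut)$ introduces in its premises only formulas $\varphi$ with $\varphi\wfr\psi$ for some $\psi$ in the conclusion, and one checks directly that the new rule $[\D]$ introduces no formulas at all beyond those already present (indeed its premise is a sub-sequent of its conclusion), so $[\D]$ is harmless for the $\wfr$-analyticity bookkeeping.

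The step I expect to carry essentially all the weight is the adaptation of the instance lemma, i.e.\ \Cref{Dinstance}. The reason it is not entirely trivial is that in principle adding the rule $[\D]$ could have imposed a semantic constraint forcing the instance to modify the accessibility relation, which would threaten preservation of seriality. What makes it go through is the key structural fact, already emphasized for \Cref{quasinstance}, that the constructed instance retains the \emph{same} accessibility relation $R$ as the quasi model it is built from — the recursion in (R1)--(R3) and (M1)--(M7) only ever redistributes truth-values at fixed worlds and never touches $R$. Hence whatever seriality the quasi model had is transmitted verbatim to its instance, and no extra work is needed; this is exactly why the proof of \Cref{Dinstance} can simply point back to the proof of \Cref{quasinstance}. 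Once this preservation is in hand, cut-admissibility and $\wfr$-analyticity for $\GPKD$ follow from the general results of \cite{lah:avr:Unified2013} in the same mechanical fashion as \Cref{gpkcut} and \Cref{PKanalyticity} did for $\GPK$.
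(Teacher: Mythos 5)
Your proposal is correct and follows essentially the same route as the paper: cut-free semantics for $\GPKD$ via serial quasi models, then \Cref{Dinstance} with the key observation that the instance construction of \Cref{quasinstance} never modifies the accessibility relation (so seriality passes to the instance for free), combined with \Cref{soundnessAndCompletenessForSmilesatKD} for cut-admissibility and the induction of \Cref{PKanalyticity} for $\wfr$-analyticity. One small correction: the premise $\Gamma\Ra\Delta$ of $[\D]$ is \emph{not} a sub-sequent of its conclusion $\uneg\Delta\Ra\ineg\Gamma$; what actually makes $[\D]$ harmless for analyticity is that every premise formula is a proper subformula of a conclusion formula, hence $\wfr$-below it, which is exactly the criterion you state in the first half of that sentence.
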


\subsection{Reflexivity}

Let $\GPKT$ be the system obtained
by augmenting $\GPK$ with the following rules:
{\small
$$[{\Ra}\ineg] \quad \dfrac{\Gamma,\varphi\Ra\Delta}{\Gamma\Ra\ineg\varphi,\Delta}
\qquad\qquad\qquad [\uneg{\Ra}] \quad \dfrac{\Gamma\Ra\varphi,\Delta}{\Gamma,\uneg\varphi\Ra\Delta}$$}%
These rules may be formulated as the basic rules:
$\tup{p_{1}\Ra\;;\;\pi_{0}}\rs\Ra\ineg p_{1}$ and $\tup{\Ra p_{1}\;;\;\pi_{0}}\rs\uneg p_{1}\Ra\;$.  It should be clear that $\GPKT$ allows thus for the derivation of the consecutions representing $\llbracket\ineg$-implosion$\rrbracket$ and $\llbracket\uneg$-explosion$\rrbracket$.

Semantically, they impose reflexivity not on all models, but only on {\em strengthened} models. Indeed, since the underlying context relation is $\pi_{0}$, for every  model $\M=\tup{\F,V}$ based on a frame $\F=\tup{W,R}$ that respects $[{\Ra}\ineg]$ and $[\uneg{\Ra}]$,  and every world $w$, if $\M,w\vDash\varphi\Ra$  then  $\M,w\vDash\;\Ra\ineg\varphi$ and if $\M,w\vDash\;\Ra\varphi$ then  $\M,w\vDash\uneg\varphi\Ra$.
To put it otherwise, 
if $\fal{w}{\varphi}$ then $\tru{w}{\ineg\varphi}$, and
if $\tru{w}{\varphi}$ then $\fal{w}{\uneg\varphi}$.
Clearly, every reflexive model satisfies these conditions. To show that every strengthened model that satisfies them is reflexive, consider an arbitrary strengthened model $\M=\tup{\tup{W,R},V}$. Then for every world $w\in W$ we have that for every formula $\varphi$, ($\tru{w}{\varphi}$ implies $\fal{w}{\uneg\varphi}$)
and ($\fal{w}{\varphi}$ implies $\tru{w}{\ineg\varphi}$), which in strengthened models means precisely that $w R w$.
We obtain thus a completeness theorem for $\GPKT$ with respect to reflexive models:
\begin{corollary}\label{soundnessAndCompletenessForSmilesatKT}
$\g\models_{\E_{\T}}\varphi$ iff $\g\vdash_{\GPKT}\varphi$ for every $\g\cup\set{\varphi}\suq\setS$, where $\E_{\T}$ is the class of reflexive models.
\end{corollary}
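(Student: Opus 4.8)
The plan is to follow exactly the template already used for Corollary~\ref{soundnessAndCompletenessForSmiles}, merely replacing the bare system $\GPK$ by its reflexive extension $\GPKT$ together with the two extra semantic conditions it induces. Since $\GPKT$ is again a basic system in the sense of \cite{lah:avr:Unified2013} --- the two added rules $[{\Ra}\ineg]$ and $[\uneg{\Ra}]$ having just been recast as the basic rules $\tup{p_{1}\Ra\;;\;\pi_{0}}\rs\Ra\ineg p_{1}$ and $\tup{\Ra p_{1}\;;\;\pi_{0}}\rs\uneg p_{1}\Ra$ --- the general mechanism recycled from Theorem~\ref{strongerSoundnessAndCompletenessForSmiles} yields an exact analogue for $\GPKT$: it is sound and complete with respect to any class of Kripke models that $(i)$ contains only models satisfying all the $[\TT\#]$ and $[\FF\#]$ conditions together with the two additional conditions read off from the new rules, namely that $\fal{w}{\varphi}$ implies $\tru{w}{\ineg\varphi}$ and $\tru{w}{\varphi}$ implies $\fal{w}{\uneg\varphi}$; and $(ii)$ contains all strengthened differentiated models satisfying all of these conditions.

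First I would instantiate this general statement with the class $\E_{\T}$ of reflexive models and discharge its two closure hypotheses. For $(i)$, every reflexive model is by definition a Kripke model, so it respects all the $[\TT\#]$ and $[\FF\#]$ conditions; and, as already observed in the paragraph preceding the corollary, every reflexive model also satisfies the two additional conditions. Hence $\E_{\T}$ indeed contains only models meeting all the required conditions.

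Second, for $(ii)$ I would appeal to the converse observation established just above, that in a strengthened model the two additional conditions say precisely that $w R w$ for every world $w$, so that every strengthened model satisfying them is reflexive. Consequently every strengthened differentiated model satisfying all the conditions is reflexive and therefore belongs to $\E_{\T}$, which shows that $\E_{\T}$ is rich enough to meet hypothesis $(ii)$.

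Finally, with $(i)$ and $(ii)$ in hand the general theorem delivers soundness and completeness of $\GPKT$ relative to $\E_{\T}$ at the level of sequents; the passage to the formula-level statement $\g\models_{\E_{\T}}\varphi$ iff $\g\vdash_{\GPKT}\varphi$ then proceeds routinely, exactly as in Corollary~\ref{soundnessAndCompletenessForSmiles}, by unfolding the definition of $\vdash_{\GPKT}$ on formulas and of sequent satisfaction. The only genuinely substantive point in the whole argument is hypothesis $(ii)$ --- that the strengthened differentiated countermodels produced by the completeness machinery are forced to be reflexive --- but this is precisely what the strengthened-model characterization (accessibility uniquely determined by the valuation) buys us, and it has already been verified in the discussion preceding the statement, so the corollary itself amounts to assembling these pieces.
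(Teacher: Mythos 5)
Your proposal is correct and takes essentially the same route as the paper: the paper likewise obtains this corollary by recycling the mechanism of Theorem~\ref{strongerSoundnessAndCompletenessForSmiles}, recasting $[{\Ra}\ineg]$ and $[\uneg{\Ra}]$ as basic rules over $\pi_{0}$, reading off the two induced semantic conditions, checking that every reflexive model satisfies them, and checking that every strengthened model satisfying them is reflexive. Your explicit discharging of hypotheses $(i)$ and $(ii)$ is exactly the content of the discussion the paper places immediately before the corollary.
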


Such semantics for $\GPKT$ allows one to easily confirm that the full type diamond-minus connective~$\ineg$ fails (DM1.2\#), and that the full type box-minus connective~$\uneg$ fails (DM2.2\#).   These properties transfer to the weaker logics $\GPKD$ and $\GPK$, of course.

Cut-admissibility for $\GPKT$ may be obtained using arguments similar to those used in proving Lemma~\ref{quasinstance}.
It follows thus that:

\begin{lemma}
	Every  reflexive strengthened quasi model has a reflexive instance.
\end{lemma}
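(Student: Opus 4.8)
The plan is to reuse almost verbatim the construction from the proof of Lemma~\ref{quasinstance}, exactly as was done for the serial case in Lemma~\ref{Dinstance}. Given a reflexive strengthened quasi model $\Q\M=\tup{\tup{W,R},QV}$, I would define the valuation $V$ by the very same recursion over $\wfr$ --- clauses (R1)--(R3) together with (M1)--(M7) --- and take the candidate instance to be $\M=\tup{\tup{W,R},V}$, leaving the accessibility relation untouched.

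First I would observe that the verification carried out in Lemma~\ref{quasinstance}, showing that $\tup{\tup{W,R},V}$ is a genuine Kripke model (i.e.\ that it satisfies all the $[\TT\#]$ and $[\FF\#]$ conditions), goes through word for word: that argument never appeals to any property of $R$ and uses only that $QV$ is a quasi valuation. In particular $\M$ inherits $[\TT\ineg]$ and $[\FF\uneg]$, and the containment ``$\xval{w}{\varphi}$ implies $\xvalQ{w}{\varphi}$'' makes $\M$ an instance of $\Q\M$.

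Next, since the construction leaves the accessibility relation unchanged and $R$ is reflexive by hypothesis, $\M$ is a reflexive model. This is precisely the reason for keeping the relation fixed: reflexivity, being a property of the accessibility relation alone, transfers for free from $\Q\M$ to $\M$, just as seriality did in Lemma~\ref{Dinstance}.

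Although the lemma itself asks only for an instance whose frame is reflexive --- which the above delivers immediately --- it is worth recording, for the ensuing cut-admissibility argument, that this reflexive instance is a genuine model of the \emph{extended} system $\GPKT$: it also satisfies the two semantic conditions induced by $[{\Ra}\ineg]$ and $[\uneg{\Ra}]$, namely that $\fal{w}{\varphi}$ implies $\tru{w}{\ineg\varphi}$ and that $\tru{w}{\varphi}$ implies $\fal{w}{\uneg\varphi}$. This needs no extra computation, since, as already noted in the reflexivity subsection, every reflexive model satisfies these two conditions by instantiating $[\TT\ineg]$ and $[\FF\uneg]$ at the loop $wRw$. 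I therefore anticipate no real obstacle here: the proof is essentially a direct transcription of Lemma~\ref{Dinstance}, the sole new ingredient being this last remark.
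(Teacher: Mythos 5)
Your proposal is correct and matches the paper's own treatment: the paper likewise proves this lemma by reusing the construction of Lemma~\ref{quasinstance} verbatim, noting (as in Lemma~\ref{Dinstance}) that the construction never uses any property of the accessibility relation and leaves it unchanged, so reflexivity transfers for free. Your closing remark that the resulting reflexive instance automatically satisfies the semantic conditions induced by $[{\Ra}\ineg]$ and $[\uneg{\Ra}]$ is also exactly the observation the paper makes in its reflexivity subsection.
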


\begin{corollary}
	$\GPKT$ enjoys cut-admissibility and is $\wfr$-analytic.
\end{corollary}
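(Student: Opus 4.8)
The plan is to follow exactly the route already used for $\GPK$ in Corollaries~\ref{gpkcut} and~\ref{PKanalyticity}, and tacitly reused for $\GPKD$, now substituting the reflexive-instance lemma just established for Lemma~\ref{quasinstance}. Two claims must be settled: cut-admissibility, and then $\wfr$-analyticity, which rests on the former. First I would set out the two semantic completeness facts the machinery of~\cite{lah:avr:Unified2013} supplies for $\GPKT$, and then feed the reflexive-instance lemma into the general cut-admissibility template (Corollary~5.48 of~\cite{lah:avr:Unified2013}).

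For cut-admissibility I would record, on the one hand, that by Theorem~\ref{strongerSoundnessAndCompletenessForSmiles} applied to the rules of $\GPKT$ (whose basic-rule reading, together with the $\pi_0$ context relation of the two reflexivity rules, amounts to reflexivity precisely on \emph{strengthened} models, as observed just before Corollary~\ref{soundnessAndCompletenessForSmilesatKT}) the full system $\GPKT$ is sound and complete with respect to the class of strengthened differentiated reflexive Kripke models that satisfy all the $[\TT\#]$ and $[\FF\#]$ conditions; and, on the other hand, that by Theorems~5.24 and~5.31 of~\cite{lah:avr:Unified2013} the cut-free fragment of $\GPKT$ is sound and complete with respect to the corresponding class of strengthened differentiated reflexive quasi models. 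The key step is then Corollary~5.48 of~\cite{lah:avr:Unified2013}: given $S$ and $s$ with $s$ not cut-free derivable from $S$ in $\GPKT$, completeness of the cut-free fragment yields a countermodel that is a strengthened differentiated reflexive quasi model; the preceding lemma equips it with a reflexive instance, which by the defining property of instances ($\xval{w}{\varphi}$ implies $\xvalQ{w}{\varphi}$) is again a countermodel, now an ordinary reflexive Kripke model. By soundness of $\GPKT$ with respect to reflexive models (Corollary~\ref{soundnessAndCompletenessForSmilesatKT}), $s$ is then not derivable from $S$ in $\GPKT$ even with cut; contraposition gives that every $\GPKT$-derivable sequent is derivable without cut.

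For $\wfr$-analyticity I would rerun the induction of Corollary~\ref{PKanalyticity} unchanged, needing only to check the two rules that $\GPKT$ adds to $\GPK$. In $[{\Ra}\ineg]$ the active formula $\varphi$ of the premise is a proper subformula of the principal formula $\ineg\varphi$ of the conclusion, so $\varphi\wfr\ineg\varphi$ by clause~$(i)$ of~$\wfr$; symmetrically $\varphi\wfr\uneg\varphi$ for $[\uneg{\Ra}]$. Since the remaining rules were already verified for $\GPK$, every rule other than $(cut)$ introduces in its premises only formulas that are $\wfr$-below some formula of its conclusion. Combining this with cut-admissibility (so that a cut-free derivation of $s$ from $S$ may be assumed), a straightforward induction on the length of the derivation shows that every formula occurring in it is $\wfr$-below a formula of $S\cup s$.

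The main obstacle is not the analyticity bookkeeping but making the general cut-admissibility template apply cleanly despite reflexivity being forced only on \emph{strengthened} models. Concretely, one must be certain that cut-free completeness delivers a countermodel already in the required shape — strengthened, differentiated, and reflexive — so that the hypothesis of the preceding lemma (a \emph{reflexive strengthened} quasi model) is genuinely met before the instance is extracted. This is precisely why that lemma was phrased for reflexive strengthened quasi models rather than for arbitrary ones, and verifying that the canonical countermodels produced by the machinery of~\cite{lah:avr:Unified2013} fall into this class is the one point that deserves care.
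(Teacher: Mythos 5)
Your proposal is correct and follows essentially the same route as the paper: cut-free completeness via quasi models, the reflexive-instance lemma (playing the role of Lemma~\ref{quasinstance}) fed into Corollary~5.48 of~\cite{lah:avr:Unified2013}, soundness with respect to reflexive models from Corollary~\ref{soundnessAndCompletenessForSmilesatKT}, and the rule-by-rule $\wfr$-check for analyticity. Your closing observation --- that the countermodels produced by the general machinery are strengthened and differentiated, hence reflexive, which is exactly why the instance lemma is stated for reflexive \emph{strengthened} quasi models --- is precisely the point the paper's terse presentation leaves implicit.
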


\subsection{Functionality}
In this section we address functional frames, that is, frames whose accessibility relations are {\em  total functions}.
In every model $\tup{\tup{W,R},V}$ of a functional frame and world $w\in W$, we have 
$\tru{w}{\ineg\varphi}$ iff $\tru{w}{\uneg\varphi}$. Hence $\ineg$ and $\uneg$ are indistinguishable. 
Accordingly, here we consider a restricted language, without~$\uneg$.

Let $\GPKF$ be the system obtained from $\GPK$ by substituting~$\ineg$ for~$\uneg$ in rules $[{\Ra}\wfrown]$ and $[\wfrown{\Ra}]$, and replacing both rules $[{\Ra}\uneg]$ and $[\ineg{\Ra}]$ with the single rule:
{\small
$$[\FUNC]\quad\dfrac{\Gamma\Ra\Delta}{\ineg\Delta\Ra\ineg\Gamma}$$
}%
It is straightforward to see that rule $[\FUNC]$ may be formulated as the following basic rule:
$\tup{\Ra\;;\;\pi_{2}}\rs\Ra$, for $\pi_{2}=\set{\tup{\varphi\Ra\;;\;\Ra\ineg\varphi},\tup{\Ra\varphi\;;\;\ineg\varphi\Ra}}$.

The latter rule and context relation impose functionality on {\em differentiated} models. Indeed, respecting the basic rule $[\FUNC]$ corresponds to seriality, similarly to the case of the rule $[\D]$.
Additionally, the context relation $\pi_{2}$ forces the accessibility relation to be a partial function:
respecting $\pi_{2}$ in a world $w$ of a model $\M=\tup{\tup{W,R},V}$ means that for every $v_{1},v_{2}\in W$ such that $w R v_{1}$ and $w R v_{2}$ and for every formula $\varphi$ we have that 
$\tru{v_{1}}{\varphi}$ iff $\fal{w}{\ineg\varphi}$ iff 
$\tru{v_{2}}{\varphi}$.  When $\M$ is differentiated, this implies that $v_{1}=v_{2}$.
Now, every functional model satisfies these semantic conditions and every differentiated model that satisfies them is functional.
We thus obtain a completeness result for $\GPKF$ with respect to functional models:
\begin{corollary}
\label{soundAndCompleteFunctional}\label{soundnessAndCompletenessForSmilesatKF}
$\g\models_{\E_{\FUNC}}\varphi$ iff $\g\vdash_{\GPKF}\varphi$ for every $\g\cup\set{\varphi}\suq\setS$, where $\E_{\FUNC}$ is the class of functional models.
\end{corollary}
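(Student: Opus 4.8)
The plan is to follow verbatim the recipe already used for Corollaries~\ref{soundnessAndCompletenessForSmiles}, \ref{soundnessAndCompletenessForSmilesatKD}, and \ref{soundnessAndCompletenessForSmilesatKT}. First I would instantiate the general soundness-and-completeness theorem for basic systems (Corollary~4.26 of~\cite{lah:avr:Unified2013}) at the basic system $\GPKF$, obtaining an analog of Theorem~\ref{strongerSoundnessAndCompletenessForSmiles}: $\GPKF$ is sound and complete with respect to \emph{any} class of Kripke models over the $\uneg$-free language that $(i)$ contains only models satisfying all the semantic conditions induced by the rules and context relations of $\GPKF$ --- i.e.\ the $[\TT\#]$ and $[\FF\#]$ conditions for ${\land}{\lor}\top\bot\ineg\wsmile\wfrown$, with $\ineg$ now in the role of $\uneg$ in the $\wfrown$-clauses, together with the condition coming from $[\FUNC]$ and $\pi_{2}$; and $(ii)$ contains every strengthened differentiated model satisfying those conditions.

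Second, I would check that $\E_{\FUNC}$ meets $(i)$ and $(ii)$, which is exactly the content already dispatched in the paragraph preceding the statement. For $(i)$: every functional model satisfies the listed semantic conditions. For $(ii)$: every \emph{differentiated} model satisfying those conditions is functional, since $[\FUNC]$ forces seriality (just as $[\D]$ does) while $\pi_{2}$ forces any $v_{1},v_{2}$ accessible from a common $w$ to agree on every formula, whence $v_{1}=v_{2}$ by differentiatedness. As ``differentiated and satisfying the conditions $\Rightarrow$ functional'' trivially yields ``strengthened differentiated and satisfying the conditions $\Rightarrow$ functional'', every canonical model demanded in $(ii)$ already lies in $\E_{\FUNC}$. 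Hence the general theorem gives soundness and completeness of $\GPKF$ with respect to functional models at the level of sequents.

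Finally I would descend to formulas exactly as in the earlier corollaries: rewriting the $[\TT\#]$ and $[\FF\#]$ conditions as the [S\#] clauses of Section~\ref{sec:proofsystem} identifies the class singled out by $(i)$ with the ordinary functional Kripke models underlying $\models_{\E_{\FUNC}}$, and unfolding $\g\vdash_{\GPKF}\varphi$ through sequents of the form $\g'\Ra\varphi$ with finite $\g'\suq\g$ delivers $\g\models_{\E_{\FUNC}}\varphi$ iff $\g\vdash_{\GPKF}\varphi$.

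I expect no genuine combinatorial obstacle, since all the real work lives in facts $(i)$ and $(ii)$ above; the one point deserving care --- and the natural place a gap could hide --- is confirming that the modifications defining $\GPKF$ (deleting $\uneg$, folding the $\wfrown$-rules onto $\ineg$, and replacing $[{\Ra}\uneg]$ and $[{\ineg}{\Ra}]$ by $[\FUNC]$ with the new context relation $\pi_{2}$) still produce a legitimate basic system, so that Corollary~4.26 applies unchanged. Once that is granted, and noting that the crucial partial-functionality is extracted only on differentiated models via injectivity of the valuation on worlds, the corollary follows with no further argument.
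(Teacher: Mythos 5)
Your proposal is correct and follows essentially the same route as the paper: the paper likewise reads $[\FUNC]$ as the basic rule $\tup{\Ra\;;\;\pi_{2}}\rs\Ra$, notes that the rule yields seriality while $\pi_{2}$ forces any two successors $v_{1},v_{2}$ of a world to agree on all formulas (hence partial functionality on \emph{differentiated} models), observes that every functional model satisfies the conditions and every differentiated model satisfying them is functional, and then invokes the general machinery of basic systems exactly as in the earlier corollaries. The only difference is one of explicitness: you spell out the instantiation of Corollary~4.26 and the descent from sequents to formulas, which the paper leaves implicit by analogy with Corollary~\ref{soundnessAndCompletenessForSmiles}.
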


In contrast with what was the case for $\GPKT$, within such semantics for $\GPKF$  there are no longer countermodels  for (DM1.2$\ineg$)  or for (DM2.2$\uneg$).  At any rate, it should be clear that $\GPKF$ extends $\GPKD$, but does not extend $\GPKT$.

Going through quasi models we may prove cut-admissibility also for $\GPKF$.
However, unlike in previous cases, considering functional quasi models 
will not suffice. Indeed, there exist differentiated strengthened quasi models that respect $[\FUNC]$ whose accessibility relation is not a total function. 
Let a $\FUNC$~quasi model $\Q\M=\tup{\F,QV}$ based on a frame $\F=\tup{W,R}$ be a serial quasi model in which for every $w,v\in W$ such that $w R v$ we have, for every $\varphi\in\setS$, both
($\falQ{v}{\varphi}$ implies $\truQ{w}{\ineg\varphi}$) and ($\truQ{v}{\varphi}$ implies $\falQ{w}{\ineg\varphi}$).
We note that, although the accessibility relation in $\FUNC$~quasi models may not be a total function, we are still able to extract a functional model from it:

\begin{lemma}\label{func-instance}
	Every  $\FUNC$~quasi model has a functional instance. 
\end{lemma}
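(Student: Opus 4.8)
The plan is to follow the recursive construction in the proof of Lemma~\ref{quasinstance}, adapting it so that the accessibility relation becomes a total function. The only genuine novelty is that the accessibility relation~$R$ of a $\FUNC$~quasi model need not itself be functional, so I would first thin it out to a functional subrelation, and then rely on the two extra conditions defining $\FUNC$~quasi models to keep the construction sound.

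Concretely, let $\Q\M=\tup{\tup{W,R},QV}$ be a $\FUNC$~quasi model. Since it is serial, I would choose for each $w\in W$ a single $R$-successor, fixing a function $f\colon W\to W$ with $wRf(w)$, and set $R'=\set{\tup{w,f(w)}\st w\in W}$; then $R'$ is a total function and $R'\suq R$. Over $\tup{W,R'}$ I would define $V$ by exactly the recursion over~$\wfr$ of Lemma~\ref{quasinstance}, keeping (R1)--(R3) verbatim, keeping the clauses for the local connectives, dropping the clause (M5) for~$\uneg$ (absent from the language of $\GPKF$), letting the clause for~$\wfrown$ mention $\ineg$ in place of $\uneg$, and rewriting clause (M4) so that it refers to the unique $R'$-successor: ``$\varphi=\ineg\psi$ and $\fal{v}{\psi}$ for the $v$ with $wR'v$''. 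As in Lemma~\ref{quasinstance}, the instance property $\xval{w}{\varphi}\Rightarrow\xvalQ{w}{\varphi}$ is immediate and connective-independent, being forced by (R1)--(R3), and all the local connectives are handled exactly as before, since their conditions never mention the accessibility relation. Everything therefore reduces to verifying the two semantic conditions for~$\ineg$ against~$R'$.

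The heart of the argument---and the step I expect to be the main obstacle---is this verification. Because $R'$ is a total function, the two conditions for~$\ineg$ collapse into the single demand that $\tru{w}{\ineg\psi}$ hold precisely when $\fal{f(w)}{\psi}$. One direction is routine: from $\fal{f(w)}{\psi}$ the induction hypothesis gives $\falQ{f(w)}{\psi}$, and since $wRf(w)$ the first defining condition of a $\FUNC$~quasi model yields $\truQ{w}{\ineg\psi}$, which through (R2), or through (R3) together with the adapted (M4), forces $\tru{w}{\ineg\psi}$. The converse is where a plain quasi model would not suffice: from $\tru{f(w)}{\psi}$ I would need $\fal{w}{\ineg\psi}$, but the ordinary clause $[\FF\ineg]$ only reports $\falQ{w}{\ineg\psi}$ when $\psi$ is quasi-true at \emph{every} $R$-successor of~$w$, whereas $f(w)$ is merely one of them. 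This is exactly the gap closed by the second defining condition of a $\FUNC$~quasi model, namely that $\truQ{v}{\psi}$ implies $\falQ{w}{\ineg\psi}$ for any $v$ with $wRv$: applying it to $v=f(w)$ delivers $\falQ{w}{\ineg\psi}$, whence (R1), or (R3) with a failing (M4), forces $\fal{w}{\ineg\psi}$. Since $R'$ is functional by construction, $\tup{\tup{W,R'},V}$ is then the required functional instance.
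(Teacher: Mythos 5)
Your proposal is correct and follows essentially the same route as the paper's proof: thin the serial relation $R$ to a total function $R'\suq R$, rerun the recursion of Lemma~\ref{quasinstance} with (M5) dropped, (M4) rephrased via the unique $R'$-successor, and the $\wfrown$-clause referring to $\ineg$, then check the $\ineg$-conditions against $R'$. Your explicit verification of the converse direction---where the second defining condition of $\FUNC$~quasi models ($\truQ{v}{\varphi}$ implies $\falQ{w}{\ineg\varphi}$ for \emph{any} successor $v$) closes the gap left by the ordinary $[\FF\ineg]$ clause---is exactly the point the paper compresses into ``the proof then carries on in a similar fashion.''
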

\begin{proof}
Let $\Q\M=\tup{\F,QV}$ be an $\FUNC$~quasi model based on a frame $\tup{W,R}$. Since $\Q\M$ is an $\FUNC$~quasi model, we have in particular that~$R$ is serial. Therefore, there exists some total function $R':W\rightarrow W$ such that $R'\suq R$. Let $\F'=\tup{W,R'}$. We define an appropriate valuation  $V:W\times\setS\rightarrow\set{\fff,\ttt}$ as in Lemma \ref{quasinstance}, while disregarding {(M5)}, and using the following instead of {(M4)} and {(M7)}:
\smallskip

\noindent
\begin{tabular}{rl}
  {(M4')} & $\varphi=\ineg\psi$, and $\fal{R'(w)}{\psi}$\\
  {(M7')} & $\varphi=\wfrown\psi$, and $\fal{w}{\varphi}$ and $\fal{w}{\ineg\varphi}$\\
\end{tabular}\smallskip

\noindent
The proof then carries on in a similar fashion to the proof of Lemma \ref{quasinstance}.
\end{proof}

\begin{corollary}
	$\GPKF$ enjoys cut-admissibility and is $\wfr'$-analytic, where $\wfr'$ is the restriction of $\wfr$ to the $\uneg$-free fragment of $\setS$, with an additional clause according to which $\ineg\varphi\wfr \wsmile\varphi$.
\end{corollary}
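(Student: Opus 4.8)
The plan is to prove the two halves of the statement separately, in each case transplanting the argument already used for $\GPK$ in Corollaries~\ref{gpkcut} and~\ref{PKanalyticity} and feeding in Lemma~\ref{func-instance} wherever the corresponding instance lemma was invoked before. Neither half requires any essentially new idea beyond what Lemma~\ref{func-instance} already supplies.

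For cut-admissibility I would follow Corollary~\ref{gpkcut} almost verbatim, with Lemma~\ref{func-instance} playing the role that Lemma~\ref{quasinstance} played there. By the general theory of basic systems (Theorems~5.24 and~5.31 of~\cite{lah:avr:Unified2013}), the cut-free fragment of $\GPKF$ is sound and complete with respect to the strengthened differentiated $\FUNC$~quasi models, that is, the quasi models validating the conditions read off from the rules and from the context relations $\pi_{0}$ and $\pi_{2}$. Hence, were some sequent $s$ derivable from $S$ in $\GPKF$ but not in its cut-free fragment, there would be a strengthened differentiated $\FUNC$~quasi model satisfying every sequent of $S$ yet falsifying $s$ at some world. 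Since the class of all $\FUNC$~quasi models contains these, Lemma~\ref{func-instance} supplies a functional instance, and by the defining property of an instance ($\xval{w}{\varphi}$ implies $\xvalQ{w}{\varphi}$) this instance is an ordinary functional model that likewise satisfies every sequent of $S$ while falsifying $s$, following Corollary~5.48 of~\cite{lah:avr:Unified2013}. This contradicts the soundness of $\GPKF$ with respect to functional models, so cut is admissible.

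For $\wfr'$-analyticity I would argue by induction on the length of a cut-free derivation of $s$ from $S$, exactly as in Corollary~\ref{PKanalyticity}: it suffices to check that in each rule of $\GPKF$ every formula occurring in a premise is $\wfr'$-below some formula of the conclusion. The propositional rules and the rule $[\FUNC]$ are settled by the subformula clause of $\wfr$ (for $[\FUNC]$ each premise formula $\psi$ lies below the conclusion formula $\ineg\psi$), while the $\wsmile$-rules are covered by the clause $\ineg\varphi\wfr\wsmile\varphi$; all of these survive the restriction to the $\uneg$-free fragment. The only genuinely new verification concerns the two $\wfrown$-rules, which in $\GPKF$ have been rewritten so that their premises contain $\ineg\varphi$ beneath the conclusion formula $\wfrown\varphi$: since the clause of $\wfr$ relating $\uneg\varphi$ to $\wfrown\varphi$ is void once $\uneg$ is discarded, it is precisely the additional clause $\ineg\varphi\wfr'\wfrown\varphi$ that licenses this step. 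Well-foundedness of $\wfr'$ is immediate, the added edge being length-preserving of exactly the same shape as the edges $\ineg\gamma\wfr\wsmile\gamma$ and $\uneg\gamma\wfr\wfrown\gamma$ already present in $\wfr$, from which no infinite descent arises.

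I expect the real difficulty to lie not in this corollary, whose two halves are essentially bookkeeping over the general machinery, but in Lemma~\ref{func-instance}, which already absorbs it. Unlike the serial and reflexive cases, where the instance retained the accessibility relation of the quasi model, here one must pass to a strict subrelation $R'\suq R$ that is a genuine total function, so the interpretations of $\ineg$ and $\wfrown$ have to be recomputed along $R'$ (clauses (M4') and (M7')), and one must check that the outcome is still a bona fide functional model respecting all the $[\TT\#]$ and $[\FF\#]$ conditions. Granting that lemma, the sole remaining care is to set up $\wfr'$ so that the rewritten $\wfrown$-rules are matched, as indicated above.
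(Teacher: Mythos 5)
Your proposal is correct and follows essentially the same route as the paper, which obtains this corollary from Lemma~\ref{func-instance} exactly as Corollaries~\ref{gpkcut} and~\ref{PKanalyticity} were obtained from Lemma~\ref{quasinstance}: the basic-systems machinery of~\cite{lah:avr:Unified2013} gives quasi-model semantics for the cut-free fragment, the instance lemma converts a quasi countermodel into a functional countermodel, and analyticity is the rule-by-rule check. One point worth flagging: you silently read the ``additional clause'' as $\ineg\varphi\wfr'\wfrown\varphi$, whereas the statement literally says $\ineg\varphi\wfr\wsmile\varphi$ --- but the latter is already clause $(ii)$ of $\wfr$ and hence redundant, and without $\ineg\varphi\wfr'\wfrown\varphi$ the rewritten $\wfrown$-rules of $\GPKF$ (whose premises place $\ineg\varphi$ under the conclusion formula $\wfrown\varphi$) would violate analyticity; so your reading is the one the statement must intend, and your verification of precisely those rules is, as you note, the only genuinely new step beyond Lemma~\ref{func-instance}.
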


We include a word about further developments which could not be included here for reasons of space.  It is easy to see that~$\ineg$ and~$\uneg$ may be defined using the customary presentation of the modal logic~$\K$ by $\ineg\varphi:={\sim}\Box\varphi$ and $\uneg\varphi:=\Box{\sim}\varphi$. When considering only functional frames (like in $\GPKF$), we get a translation to $\GKF$ --- the ordinary modal logic of functional Kripke models. For the $\wsmile\wfrown$-free fragment of this logic, we may apply the general reduction to SAT proposed in \cite{lahavZoharSatBased}, which in particular means that the derivability problem for it is in co-NP. 
We further note that if one dismisses ${[{\vee}{\Ra}]}$ from the proof system,
derivability can be decided in linear time, by producing SAT-instances that consist solely of
Horn clauses. Such `half-disjunction' was also suggested in the context of primal infon logic \cite{Beklemishev29052012}, to obtain a linear time decision procedure.%

\subsection{Symmetry}

Let $\GPKB$ be the system obtained from $\GPK$
by replacing ${[{\ineg}{\Ra}]}$ and ${[{\Ra}{\ineg}]}$  with the following rules: 
{\small
$$[\B_{1}] \quad \dfrac{\Gamma,\ineg\Gamma',\varphi\Ra\Delta,\uneg\Delta'}{\uneg\Delta,\Delta'\Ra\uneg\varphi,\ineg\Gamma,\Gamma'}
\qquad\qquad\qquad [\B_{2}] \quad \dfrac{\Gamma,\ineg\Gamma'\Ra\varphi,\Delta,\uneg\Delta'}{\uneg\Delta,\Delta',\ineg\varphi\Ra\ineg\Gamma,\Gamma'}$$}%

\noindent
These correspond to the following basic rules:
$\tup{ p_{1}\Ra\;;\;\pi_{3}}\rs\Ra\uneg p_{1}$ and  $\tup{\Ra p_{1}\;;\;\pi_{3}}\rs\ineg p_{1}\Ra$, 
for the context relation $\pi_{3}=\{\tup{ p_{1}\Ra\;;\;\Ra\ineg p_{1}}$,\! $\tup{\ineg p_{1}\Ra\;;\;\Ra p_{1}}$,\! $\tup{\Ra p_{1}\;;\;\uneg p_{1}\Ra}$,\! $\tup{\Ra\uneg p_{1}\;;\; p_{1}\Ra}\}$.
This relation satisfies the following property: $s\,\pi_{3}\,q$ iff $\overline{q}\,\pi_{3}\,\overline{s}$, where $(\overline{\Ra\varphi})$ denotes $(\varphi\Ra)$ and $(\overline{\varphi\Ra})$ denotes $(\Ra\varphi)$.
By Proposition 4.28 of \cite{lah:avr:Unified2013}, the semantic condition these rules impose on strengthened models is symmetry of the accessibility relation. 
In addition, every symmetric model respects these rules, as well as the context relation $\pi_{3}$. 
It follows that:
\begin{corollary}\label{soundnessAndCompletenessForSmilesatKB}
$\g\models_{\E_{\B}}\varphi$ iff $\g\vdash_{\GPKB}\varphi$ for every $\g\cup\set{\varphi}\suq\setS$, 
where $\E_{\B}$ is the class of symmetric models. 
\end{corollary}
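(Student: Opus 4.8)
The plan is to obtain the result as an immediate application of the general soundness-and-completeness mechanism of \cite{lah:avr:Unified2013} underlying \Cref{strongerSoundnessAndCompletenessForSmiles}, instantiated this time for the basic system $\GPKB$, exactly as was done for $\GPKD$, $\GPKT$ and $\GPKF$. This reduces the corollary to checking that the class $\E_{\B}$ of symmetric models meets the two closure requirements of that mechanism: $(i)$ it contains only Kripke models satisfying every semantic condition read off from the rules and context relations of $\GPKB$; and $(ii)$ it contains every strengthened differentiated model satisfying all those conditions.

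First I would identify the relevant semantic conditions. The connective rules inherited from $\GPK$ contribute the same $[\TT\#]$ and $[\FF\#]$ clauses as before, whereas the replacement of $[{\ineg}{\Ra}]$ and $[{\Ra}{\ineg}]$ by $[\B_{1}]$ and $[\B_{2}]$, carrying the context relation $\pi_{3}$, is what introduces symmetry. The key structural fact, already recorded above the statement, is that $\pi_{3}$ enjoys the duality $s\,\pi_{3}\,q$ iff $\overline{q}\,\pi_{3}\,\overline{s}$; by Proposition 4.28 of \cite{lah:avr:Unified2013} this is precisely the hypothesis guaranteeing that the semantic condition the new rules and $\pi_{3}$ impose on \emph{strengthened} models is exactly symmetry of the accessibility relation $R$.

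With the conditions in hand, I would verify the two inclusions. For $(i)$ it suffices to show that an arbitrary symmetric model respects each of the four pairs of $\pi_{3}$: unwinding the induced implication ``whenever $w R v$\ldots'' for each pair and invoking symmetry (so that $v R w$ as well), the desired conclusion follows from the $[\TT\ineg]$, $[\FF\ineg]$, $[\TT\uneg]$ and $[\FF\uneg]$ clauses; this is the content of the remark that every symmetric model respects these rules and $\pi_{3}$. For $(ii)$, given a strengthened differentiated model satisfying all the conditions, the equivalence supplied by Proposition 4.28 yields symmetry of its accessibility relation directly, placing it in $\E_{\B}$.

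Having established $(i)$ and $(ii)$, the general theorem delivers soundness and completeness of $\GPKB$ with respect to $\E_{\B}$ at the level of sequents; translating this to the consequence relation between formulas exactly as in the passage from \Cref{strongerSoundnessAndCompletenessForSmiles} to \Cref{soundnessAndCompletenessForSmiles} gives $\g\models_{\E_{\B}}\varphi$ iff $\g\vdash_{\GPKB}\varphi$. I expect the only genuine computation to lie in part $(i)$---checking that symmetric models respect each pair of $\pi_{3}$---since the converse equivalence needed for $(ii)$, which at first sight looks to be the hard part, is delivered wholesale by Proposition 4.28 precisely because of the duality property of $\pi_{3}$.
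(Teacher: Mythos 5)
Your proposal is correct and follows essentially the same route as the paper: formulate $[\B_{1}]$ and $[\B_{2}]$ as basic rules over the context relation $\pi_{3}$, use its duality property together with Proposition 4.28 of \cite{lah:avr:Unified2013} to see that the condition imposed on strengthened models is exactly symmetry, check that every symmetric model respects the rules and $\pi_{3}$, and then invoke the general soundness-and-completeness mechanism behind \Cref{strongerSoundnessAndCompletenessForSmiles}. The paper states this argument more tersely, but the decomposition and the key citations are identical to yours.
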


\noindent 
Symmetric frames are relevant from the viewpoint of sub-classical properties of negation.  They validate, for instance, the consecutions $\ineg\ineg p\models p$ and {$p\models \uneg\uneg p$}.
Paraconsistent logics based on symmetric (and reflexive) frames are also studied in~\cite{avr:zam:AiML16}, a paper that investigates in detail a conservative extension of the corresponding logic, obtained by the addition of a classical implication (but without primitive~$\uneg$ and~$\wfrown$), and offers for this logic a sequent system for which cut is not eliminable. 

Quasi models for $\GPKB$ are not necessarily symmetric, making it harder to convert them into instances in the form of symmetric models. This is why cut-admissibility for our system $\GPKB$ is here left open as a matter for further research.
However, using a similar technique of basic systems, it can be straightforwardly shown that $\GPKB$ is $\wfr$-analytic.
This does not require quasi models at all: one only has to show that every {\em partial} model, whose valuation's domain is closed under $\wfr$-subformulas, 
may be extended to a full model (see  Corollary 5.44 in~\cite{lah:avr:Unified2013}).

\section{Definability of classical negation}
\label{sec:definability}
 In this section we investigate definability of classical negation in the modal logics studied in this paper.
Given a set~$C$ of connectives and a logic~${\bf L}$, we denote by $\Lano{{\bf L}}{C}$ the $C$-free fragment of~${\bf L}$, that is, the restriction of~${\bf L}$ to the language without the connectives in~$C$.

\begin{theorem}~
\begin{enumerate}
	\item Classical negation is definable in the logics: 
	$\Lano{PKT}{\set{{\ineg},\wsmile}}$, $\Lano{PKT}{\set{{\uneg},\wfrown}}$, 
	$PKD$, and $PKF$.
	\item Classical negation is not definable in the logics: 
	 $PK$, $PKB$, $\Lano{PKT}{\set{\wsmile,\wfrown}}$, $\Lano{PKD}{\set{\wsmile}}$, $\Lano{PKD}{\set{\wfrown}}$, $\Lano{PKF}{\set{\wsmile}}$, and $\Lano{PKF}{\set{\wfrown}}$.
\end{enumerate}
\end{theorem}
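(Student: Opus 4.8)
The plan is to treat the two items by complementary means: definability by exhibiting explicit terms and checking the clause [S${\sim}$], non-definability by producing, for every candidate term, a model in the relevant class on which that term parts company with~${\sim}$.

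For item~(1) I would just verify [S${\sim}$] for the terms already announced in \Cref{aroundCN}. In $\Lano{PKT}{\set{\uneg,\wfrown}}$ take ${\sim}\varphi:=\ineg\varphi\land\wsmile\varphi$, in $\Lano{PKT}{\set{\ineg,\wsmile}}$ its dual ${\sim}\varphi:=\uneg\varphi\lor\wfrown\varphi$, in $PKD$ the term ${\sim}\varphi:=(\uneg\varphi\land\wsmile\varphi)\lor\wfrown\varphi$, and in $PKF$ that same $PKD$-term with $\uneg$ rewritten as $\ineg$ (legitimate since every functional frame is serial and $\tru{w}{\ineg\varphi}$ iff $\tru{w}{\uneg\varphi}$ there). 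Each verification is a short case split on whether $\varphi$ is satisfied at the evaluation world~$w$: reflexivity supplies $wRw$, so in $PKT$ the successor required by $\ineg$ or $\uneg$ is $w$ itself; seriality supplies at least one successor, which is exactly what makes the $PKD$-term collapse to $\lnot\varphi$ --- the conjunct $\wsmile\varphi$ being present precisely to suppress $\uneg\varphi$ at a world where $\varphi$ holds while $\varphi$ fails at all successors. These are routine and I would only display the computations.

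For item~(2) the uniform idea is that a defining term $\chi(p)$ must agree with ${\sim}p$ on \emph{every} model of the class, so one refuting model per candidate suffices. For $PK$ and $PKB$, which carry the full language, I use a frame with an isolated world~$w$ (a dead end; for $PKB$ the one-point frame is vacuously symmetric). At such a~$w$ every modal subformula degenerates to a constant --- $\ineg\psi$ and $\wfrown\psi$ are unsatisfied while $\uneg\psi$ and $\wsmile\psi$ are satisfied, whatever $\psi$ is --- so the value at~$w$ of any formula is a \emph{monotone} function of the values of the propositional variables at~$w$. Since ${\sim}p$ is antitone in~$p$, comparing the two valuations differing only by making $p$ true rather than false at~$w$ shows no $\chi(p)$ can track ${\sim}p$. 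The remaining five logics are serial, reflexive or functional, so dead ends are unavailable and this collapse fails; there I would first reduce to the non-definability of a single adjustment connective, exploiting the object-level identities $\wsmile\varphi\equiv{\sim}(\varphi\land\ineg\varphi)$ and $\wfrown\varphi\equiv{\sim}\varphi\land{\sim}(\uneg\varphi)$, each immediate from [S${\sim}$] together with [S$\wsmile$] resp.\ [S$\wfrown$] (in the functional logics $\uneg$ is read as $\ineg$). Hence if ${\sim}$ were definable in $\Lano{PKD}{\set{\wsmile}}$, $\Lano{PKF}{\set{\wsmile}}$ or $\Lano{PKT}{\set{\wsmile,\wfrown}}$ then $\wsmile$ would be definable from the connectives still present (note $\ineg$ survives in all three), and dually in $\Lano{PKD}{\set{\wfrown}}$ or $\Lano{PKF}{\set{\wfrown}}$ the connective $\wfrown$ would become definable; so it is enough to refute definability of the relevant adjustment connective from the fragment lacking it.

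I would settle those refutations by a glut/gap-preservation invariant. For $\wsmile$, the $\ineg$-consistency operator, I would carry a ``glut'' value that each of $\land,\lor,\top,\bot,\ineg,\uneg,\wfrown$ leaves fixed, so that a formula all of whose variables are glutted stays glutted, whereas $\wsmile\varphi$ must there take the opposite, non-glut value; dually for $\wfrown$ using a ``gap'' value, exploiting that a $\wsmile$-free (resp.\ $\wfrown$-free) context cannot control the paraconsistency (resp.\ paracompleteness) that the negative modalities permit. \emph{I expect this last step to be the main obstacle}: unlike the isolated-world argument it cannot be read off a single two-valued world, and one must check that the chosen many-valued/quasi-model invariant is sound for the whole class --- serial, reflexive, or functional respectively --- and is genuinely violated by the adjustment connective (in effect an $LP$/$K_3$-style functional-incompleteness statement transported to negative modalities), while ensuring the witnessing construction stays inside the frame class at hand and does not inadvertently make the missing connective expressible through seriality or functionality.
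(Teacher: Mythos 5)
Your item~(1) and your treatment of $PK$ and $PKB$ in item~(2) are correct and essentially coincide with the paper's proof: the defining terms are the ones the paper uses (the paper verifies them proof-theoretically, deriving $\Ra\varphi,{\sim}\varphi$ and $\varphi,{\sim}\varphi\Ra$ and invoking cut, but your semantic check is equivalent given completeness), and for $PK$/$PKB$ the paper likewise exploits dead ends --- it takes a two-world model with empty accessibility relation, all atoms true at $w$ and false at $v$, and the persistence invariant ``$\fal{w}{\varphi}$ implies $\fal{v}{\varphi}$'', which is your monotonicity observation in a slightly different guise.

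The genuine gap is exactly where you flag it: the five logics $\Lano{PKT}{\set{\wsmile,\wfrown}}$, $\Lano{PKD}{\set{\wsmile}}$, $\Lano{PKD}{\set{\wfrown}}$, $\Lano{PKF}{\set{\wsmile}}$, $\Lano{PKF}{\set{\wfrown}}$. Your reduction via $\wsmile\varphi\equiv{\sim}(\varphi\land\ineg\varphi)$ and $\wfrown\varphi\equiv{\sim}\varphi\land{\sim}\uneg\varphi$ is sound, but the glut/gap-preservation lemma you then need is not merely unproven --- as stated it fails. In the two-valued Kripke semantics, glut-hood (truth of both $\varphi$ and $\ineg\varphi$ at a world) is \emph{not} ``left fixed'' by $\bot$, $\uneg$ or $\wfrown$: the formula $\bot$ is never glutted, and if $\varphi$ is true at $w$ and at $w$'s successors then $\uneg\varphi$ and $\wfrown\varphi$ are simply false at $w$. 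If instead you pass to a three-valued reading (LP-style, or the paper's quasi valuations) in which $\uneg$ and $\wfrown$ do propagate the value ``both'', then $\wsmile$ propagates it as well: when $\varphi$ and $\ineg\varphi$ are both at-least-true and at-least-false, the clauses [$\TT\wsmile$] and [$\FF\wsmile$] both fire, so $\wsmile\varphi$ again comes out ``both'' rather than taking the opposite, non-glut value you require. And even were a workable three-valued invariant found, you would still owe a bridge lemma: an equivalence holding in every two-valued Kripke model of the class need not persist to an auxiliary three-valued structure, so a three-valued countermodel does not by itself refute definability in the logic. The paper sidesteps all of this by staying inside genuine Kripke models and dispensing with your reduction: for each of the five logics it fixes one concrete two-world model --- $R=W\times W$ for $\Lano{PKT}{\set{\wsmile,\wfrown}}$, $R=\set{\tup{w,v},\tup{v,v}}$ for the $\wsmile$-free serial and functional cases, $R=\set{\tup{w,w},\tup{v,w}}$ for the $\wfrown$-free ones --- with all atoms true at $w$ and false at $v$, and shows by induction on fragment formulas that $\fal{w}{\varphi}$ implies $\fal{v}{\varphi}$; any candidate ${\sim}p$ would need $\fal{w}{{\sim}p}$ and $\tru{v}{{\sim}p}$, a contradiction. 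Note that this two-world construction is, in effect, your three-valued idea realized concretely: the joint value pattern ``false at $w$, true at $v$'' is the unreachable third value, and because the invariant lives in an actual model of the class, no soundness bridge is needed.
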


\begin{proof}\\
(i) 
For $\Lano{PKT}{\set{{\ineg},\wsmile}}$ we set ${\sim}\varphi:=\uneg\varphi\vee\wfrown\varphi$, for $\Lano{PKT}{\set{{\uneg},\wfrown}}$ we set ${\sim}\varphi:=\ineg\varphi\wedge\wsmile\varphi$, and for $PKD$ and $PKF$ we set ${\sim}\varphi:=(\uneg\varphi\wedge\wsmile\varphi)\vee\wfrown\varphi$. 
It is easy to see that $\Ra\varphi,{\sim}\varphi$ and $\varphi,{\sim}\varphi\Ra$ are derivable in each system
for the defined connective~${\sim}$. 
Using cut, one obtains the usual sequent rules for classical negation.
\Cref{PKDDef} provides the derivations for $PKD$.
(Given that $\GPKF$ is a deductive extension of $\GPKD$, the derivation in \Cref{PKDDef} is also good for $PKF$.)

%



\begin{table}[hbt]
\centering
  \begin{tabular}{|c|}
    \hline\vspace{-2mm}\\
$\dfrac{
	\dfrac{
		\dfrac{
			\ddfrac{}{\varphi\Ra\varphi}~~~~
			\dfrac{
				\varphi\Ra\varphi
			}{
				\uneg\varphi\Ra\ineg\varphi
			}[\D]
			}{
				\varphi,\uneg\varphi,\wsmile\varphi\Ra
			}[{\wsmile}{\Ra}]
		}{
			\varphi,\uneg\varphi\wedge\wsmile\varphi\Ra
		}[{\land}{\Ra}]~~~~
		\dfrac{\varphi\Ra\varphi,\uneg\varphi}{
			\varphi,\wfrown\varphi\Ra
		}[{\wfrown}{\Ra}]
	}{
		\varphi,(\uneg\varphi\wedge\wsmile\varphi)\vee\wfrown\varphi\Ra
	}[{\lor}{\Ra}]$\\\\\hline\\
    $\dfrac{\dfrac{\dfrac{\varphi\Ra\varphi~~~~\uneg\varphi\Ra\uneg\varphi}{\Ra\varphi,\uneg\varphi,\wfrown\varphi}[{\Ra}{\wfrown}]~~~~\dfrac{\varphi,\ineg\varphi\Ra\varphi,\wfrown\varphi}{\Ra\varphi,\wsmile\varphi,\wfrown\varphi}[{\Ra}{\wsmile}]}{\Ra\varphi,(\uneg\varphi\wedge\wsmile\varphi),\wfrown\varphi}[{\Ra}{\land}]}{\Ra\varphi,(\uneg\varphi\wedge\wsmile\varphi)\vee\wfrown\varphi}[{\Ra}{\lor}]$\vspace{-2mm}\\\\
    \hline
      
  \end{tabular}
  \caption{
  }
  \label{PKDDef}
  \vspace{-4mm}
\end{table}

\medskip

\noindent
(ii) Let $X\in\{\!PK,$\! $PKB,$\! $\Lano{PKT}{\set{\wsmile,\wfrown}},$\! $\Lano{PKD}{\set{\wsmile}\!},$\! $\Lano{PKD}{\set{\wfrown}},$\! $\Lano{PKF}{\set{\wsmile}},$\! $\Lano{PKF}{\set{\wfrown}}\}$.
		Suppose for the sake of contradiction that classical negation~$\sim$ is definable in~$X$. Let $p\in\setP$ and~let $\varphi$ be ${\sim}(p)$. Then both $\Ra\varphi,p$ and $p,\varphi\Ra$ are valid in~$X$. 
		Consider a set $W$ that consists of two worlds, $w$ and $v$, and a valuation $V$ such that $V(w,q)=1$ and $V(v,q)=0$ for every atomic formula~$q$ (includ\-ing~$p$).
		Now, for each relation $R_X$ on $W$,  consider the model $\M_X=\tup{\tup{W,R_X},V}$.
		If $\M_X$ belongs to the class of models that semantically characterize $X$, then we must have  that $\M_X,w\Vdash\varphi,p\Ra$ and $\M_X,v\Vdash\;\Ra p,\varphi$. Since in $\M_X$ we have $\tru{w}{p}$ and $\fal{v}{p}$, we must then have $\fal{w}{\varphi}$ and $\tru{v}{\varphi}$. 
		We show that this is impossible, by structural induction on $\varphi$. More precisely, we claim that if $\fal{w}{\varphi}$ then $\fal{v}{\varphi}$.
		To show this, we consider the possible values for $X$, and define the accessibility relation $R_X$ in each case. 
		For $X\in\{PK,PKB\}$ define $R_{X}=\varnothing$, for $X=\Lano{PKT}{\set{\wsmile,\wfrown}}$ define $R_{X}=W\times W$, for $X\in\set{\Lano{PKD}{\set{\wsmile}},\Lano{PKF}{\set{\wsmile}}}$ define $R_{X}=\set{\tup{w,v},\tup{v,v}}$, and for $X\in\set{\Lano{PKD}{\set{\wfrown}},\Lano{PKF}{\set{\wfrown}}}$ define $R_{X}=\set{\tup{w,w},\tup{v,w}}$.
		We describe in detail only the third case.
			For this case, 
			note that since $R_{X}$ is a total function, 
			$\M_X$ belongs to the appropriate class of models, and $\ineg$ and $\uneg$ are indistinguishable, hence we may choose to consider~$\uneg$ instead of~$\ineg$. The cases where $\varphi$ is atomic, a conjunction, or a disjunction are trivial. If $\varphi=\uneg\psi$  for some~$\psi$ and $\fal{w}{\varphi}$, then we must have $\tru{v}{\psi}$ by [$\TT\uneg$], which implies by [$\FF\uneg$] that $\fal{v}{\varphi}$. If $\varphi=\wfrown\psi$ for some~$\psi$, then $\fal{v}{\varphi}$ must hold good: indeed, if on the one hand $\tru{w}{\uneg\psi}$ then $\fal{v}{\psi}$ by $[\FF\uneg]$, and hence $\tru{v}{\uneg\psi}$ by $[\TT\uneg]$, which implies by [$\FF\wfrown$] that $\fal{v}{\varphi}$; if on the other hand $\fal{w}{\uneg\psi}$ then $\tru{v}{\psi}$ by [$\TT\uneg$], and hence again $\fal{v}{\varphi}$ follows by [$\FF\wfrown$].
		\qed
\renewcommand{\qed}{}
\end{proof}

\section{This is possibly not the end}
\label{sec:coda}
 In contrast to the usual `positive modalities' of normal modal logics, which are monotone with respect to the underlying notion of consequence, we have devoted this paper to antitone connectives known as `negative modalities' --- specifically, to full type box-minus and full-type diamond-minus connectives.


Be they monotone or antitone on each of their arguments, the connectives of normal modal logics are always congruential: they treat equivalent formulas as synonymous.  The phenomenon seems to be an exception rather than the rule if many-valued logics with non-classical negations are involved.
For instance, Kleene's 3-valued logic fails to be congruential, as $p\land\neg p$ is equivalent to $q\land\neg q$, but their respective negations, $\neg(p\land\neg p)$ and $\neg(q\land\neg q)$, are not equivalent.
Also, the earliest paraconsistent logic in the literature (cf.~\cite{jas:48}) fails to be congruential, in spite of having been defined in terms of a translation into a fragment of the modal logic $S5$, and this failure remained unknown for decades (cf.~\cite{jmar:04g}).  The same holds for the other early paraconsistent logics developed later on, containing extra `strong negations' that live in the vicinity of classical negation (cf.~\cite{Nel:NaSoCiCS:59,daC:comptes:63}).
Of course, there are important `non-exceptions': intuitionistic logic and other intermediate logics constitute congruential paracomplete logics.  For another example perhaps more to the point, consider the four-valued logic of FDE, whose semantics may be formulated having as truth-values $\{\mathbf{t},\mathbf{b},\mathbf{n},\mathbf{f}\}$, where $\{\mathbf{t},\mathbf{b}\}$ are designated, the transitive reflexive closure of the order~$\leq$ such that $\mathbf{f}\leq\mathbf{b},\mathbf{n}\leq\mathbf{t}$ may be used to define $\land$ and $\lor$, respectively, as its meet and its join, while $\neg\langle\mathbf{t},\mathbf{b},\mathbf{n},\mathbf{f}\rangle:=\langle\mathbf{f},\mathbf{b},\mathbf{n},\mathbf{t}\rangle$.  
It is not hard to see that this logic is congruential and by defining the operators 
$\wsmile\langle\mathbf{t},\mathbf{b},\mathbf{n},\mathbf{f}\rangle:=\langle\mathbf{t},\mathbf{n},\mathbf{b},\mathbf{t}\rangle$ and 
$\wfrown\langle\mathbf{t},\mathbf{b},\mathbf{n},\mathbf{f}\rangle:=\langle\mathbf{f},\mathbf{n},\mathbf{b},\mathbf{f}\rangle$
it gets conservatively extended into another congruential logic that deductively extends our logic $PKF$ (but does not deductively extend $PKT$), if we interpret~$\ineg$ as~$\neg$.  It is worth noting that the latter logic is equivalent to the expansion of FDE by the addition of a classical negation.

Some terminological conventions and some concepts used in the present paper were borrowed or adapted from other fonts, sometimes without explicit reference.  For instance, in Section~\ref{sec:intro}, dadaistic and nihilistic models come from~\cite{jmarcos:neNMLiP}, and that paper also introduces the connectives~$\wsmile$ and~$\wfrown$ of the so-called Logics of Formal Inconsistency (cf.~\cite{car:jmar:Taxonomy}) and the dual Logics of Formal Undeterminedness (cf.~\cite{jmarcos:neNMLiP}, where the adjustment connectives are called connectives `of perfection').  The minimal conditions on negation, called $\llbracket$\textit{falsificatio}$\rrbracket$ and $\llbracket$\textit{verificatio}$\rrbracket$, come from~\cite{Mar:OnPLR}.
What we in the present paper call `determinacy' has in~\cite{dod:mar:ENTCS2013} been called `determinedness'.
The `strengthened models' from Section~\ref{sec:proofsystem} correspond to models with strongly-legal valuations in the terminology of~\cite{lah:avr:Unified2013}.
In Section~\ref{sec:extensions}, Rule [\D] may be thought of as a variation on the following well-known sequent rule for the modal logic $KD$: ${\g\Ra}\;/\;{\Box\g\Ra}$, and rules for \GPKT\ are variations on the usual sequent rule for the modal logic $KT$: ${\g,\varphi\Ra\d}\;/\;{\g,\Box\varphi\Ra\d}$ (cf.~\cite{Wansing}).
Also, the rule for \GPKF\ is a variation on the sequent rule from~\cite{kawai1987sequential} for the `Next' operator in the temporal logic $LTL$, namely: ${\g\Ra\d}\;/\;{\Box\g\Ra\Box\d}$.
We have not been able to find in the literature the obvious rules 
${\g,\varphi\Ra\d}\;/\;{\Box\g,\lozenge\varphi\Ra\lozenge\d}$ and 
${\g\Ra\varphi,\d}\;/\;{\Box\g\Ra\Box\varphi,\lozenge\d}$
for the modal logic~$K$ of which our rules $[{\ineg}{\Ra}]$  and $[{\Ra}{\uneg}]$ from Section~\ref{sec:proofsystem} would be variations~on. 
In Section~\ref{sec:analyticity}, the trick behind using three-valued models for addressing the admissibility of the cut rule goes at least as far back as \cite{Schutte60}.

The main feature of our approach here has been to rely on theoretical technology built elsewhere and show how it may be adapted to the present case.  Our hope is that this should prove a beneficial methodology, and that the idea of obtaining completeness and cut admissibility as particular applications of more general results will become more common, rather than proceeding always through \textit{ad hoc} completeness and cut elimination theorems.


While we have directed our attention, in this paper, to classes of frames that turned out to be particularly significative from the viewpoint of the relation between negative modalities of different types, we see two very natural ways of extending such study.
The \textit{first} natural extension would be to look at other classes of frames that prove to be relevant from the viewpoint of sub-classical properties of negation.  
For instance, 
it is easy to see that the class of frames with the Church-Rosser property validates $\ineg\ineg p\models \uneg\uneg p$, pinpointing an interesting consecution involving the interaction between negations of different types.    
Some other classes of frames deserving study do not seem to show the same amount of promise, from the viewpoint of paraconsistency or paracompleteness.  For instance, euclidean frames validate {$\llbracket\ineg$-explosion$\rrbracket$} 
if in the set of formulas $\{\ineg p,p\}$ one replaces~$p$ by $\ineg r$, and validate {$\llbracket\uneg$-implosion$\rrbracket$} if in $\{\uneg p,p\}$ one replaces~$p$ by $\uneg r$; also, transitive frames cause a similar behavior, but now swapping the roles of~$\ineg r$ and~$\uneg r$ in replacing~$p$.
Alternatively, a \textit{second} avenue worth exploring would lead us into logics containing more than one negative modality of the same type (as it has been done for logics with multiple paracomplete negations in~\cite{res:combpossneg:SL97}).
One could for instance consider not only the `forward-looking' negative modalities defined by the semantic clauses {[S$\ineg$]} and {[S$\uneg$]}, but also `backward-looking' negative modalities~$\ineg^{\!-1}$ and~$\uneg^{\!-1}$ defined by the clauses obtained from the latter ones by replacing $wRv$ by $vRw$ (such `converse modalities' have been studied in the context of temporal logic \cite{Prior67}, as well as in the context of the so-called Heyting-Brouwer logic \cite{rau:BH:DM1980}).
The interaction between the various negations would then be witnessed, in such extended language, by the validity over arbitrary frames of `pure' consecutions such as $\ineg^{\!-1}\ineg p\models p$ and $\ineg\ineg^{\!-1} p\models p$ (as well as $p\models \uneg^{\!-1}\uneg p$ and $p\models \uneg\uneg^{\!-1} p$), and the validity over symmetric frames of `mixed' consecutions such as $\uneg^{\!-1}\ineg p\models p$ and $\uneg\ineg^{\!-1} p\models p$ (as well as $p\models \ineg^{\!-1}\uneg p$ and $p\models \ineg\uneg^{\!-1} p$).
In our view, it seems worth the effort applying the machinery employed in the present paper to the above mentioned systems, and still others, in order to investigate results analogous to the ones we have here looked at.%
\footnote{The authors acknowledge partial support by the Marie Curie project GeTFun (PIRSES-GA-2012-318986) funded by EU-FP7, by CNPq and by The Israel Science Foundation (grant no.\ 817-15). They also take the chance to thank Hudson Benevides and three anonymous referees for the careful reading of an earlier version of this manuscript.}
%



\bibliographystyle{aiml16}
\bibliography{aiml16}
\end{document}